\begin{document}

%
\title{Channel Tracking and Prediction for IRS-aided Wireless Communications}
%
%
%

\author{Yi Wei, Ming-Min Zhao, An Liu, and Min-Jian Zhao \vspace{-0.7cm}
\thanks{The authors are with the College of Information Science and Electronic Engineering, Zhejiang University, Hangzhou 310027, China
(email: \{21731133, zmmblack, anliu, mjzhao\}@zju.edu.cn).}
}
\IEEEpeerreviewmaketitle

\maketitle

\vspace{-1.5em}

\begin{abstract}
For intelligent reflecting surface (IRS)-aided wireless communications, channel estimation is essential and usually requires  excessive channel training overhead when the number of IRS reflecting elements is large.
The acquisition of accurate  channel state information (CSI) becomes more challenging when the channel is not quasi-static  due to the mobility of the transmitter and/or receiver.
In this work, we study an IRS-aided wireless communication system with a time-varying channel model and propose an innovative two-stage transmission protocol.
In the first stage, we send pilot symbols and track the direct/reflected channels based on the received signal, and then data signals are transmitted. In the second stage, instead of sending pilot symbols first, we directly  predict the direct/reflected channels and all the time slots are used for data transmission.
Based on the proposed transmission protocol, we propose a two-stage channel tracking and prediction (2SCTP) scheme to obtain the direct and reflected channels with low channel training overhead, which is achieved by exploiting the temporal correlation of the time-varying channels.
Specifically, we first consider a special case where the IRS-access point (AP) channel is assumed to be static, for which a Kalman filter (KF)-based algorithm and  a long short-term memory (LSTM)-based neural network are proposed for channel tracking and prediction, respectively. Then, for the more
general case where the IRS-AP, user-IRS and user-AP channels are all assumed to be time-varying,
we present a generalized KF (GKF)-based channel tracking algorithm, where proper approximations are employed to handle the underlying non-Gaussian random variables.
Numerical simulations are provided to
verify the effectiveness of our proposed transmission protocol and channel tracking/prediction algorithms as  compared to existing ones.
\end{abstract}

\vspace{-0cm}
\begin{IEEEkeywords}
Intelligent reflecting surface (IRS), channel tracking, channel prediction, Kalman filter, deep learning.
\end{IEEEkeywords}

\newtheorem{remark}{\bf Remark}
\newtheorem{theorem}{\bf Theorem}
\vspace{-0cm}
\linespread{1.5}

\vspace{-0.5cm}\section{Introduction}
As a promising cost-effective technology for enhancing the spectral and energy efficiency of  wireless communication systems,
 intelligent reflecting surface (IRS), also known  as reconfigurable intelligent surface (RIS), has drawn significant attention in both academy and industry \cite{9326394}. The IRS is composed of a massive number of low-cost passive reflecting elements, which can be smartly controlled to dynamically configurate  the wireless communication environment for transmission enhancement and interference suppression. Due to its passive nature, IRS requires lower hardware cost and energy consumption  as compared to the traditional active relays.   As such, IRS can be densely deployed  in  wireless communication systems to flexibly reconfigurate the propagation environment, achieving improved communication  capacity and reliability \cite{9195133}.

Several current research activities focus
on how to implement IRSs \cite{Tan2018,Subrt2012}, as well as the potentials and challenges of IRS-aided wireless communications \cite{9374975,Gong2019}. In most of the existing works, the acquisition of channel state information (CSI) is essential since the performance gain provided by IRS is heavily dependent on the channel estimation  accuracy.
However, since IRS is passive in general and can neither
send nor receive pilot symbols,
the IRS-user  and access point (AP)-IRS channels cannot be
estimated separately. Therefore,
the channel estimation problem in IRS-aided wireless communication systems is much more challenging
as compared to those in conventional systems without IRS.
Fortunately, the knowledge of the cascaded user-IRS-AP channel, also known as the \emph{reflected channel},  is sufficient for signal detection and beamforming design \cite{9115725}.
As such, most of the existing channel estimation related contributions in the IRS literature focused on the reflected channel estimation problem \cite{Mishra2019,Jensen2019,arXiv1912,wei2021channel,Jie2019,9370097,He2019,YuboWan2020,9198125}. Specifically,
in \cite{Mishra2019}, the authors proposed to estimate the cascaded channel coefficients  one-by-one by switching only one IRS reflecting element on at each time. In \cite{Jensen2019},
 a discrete Fourier transform (DFT) based  IRS phase-shift matrix (also named as \emph{reflection pattern}) was proposed, where  all IRS reflecting elements are designed to be active and the estimation variance is reduced as compared to  the scheme in  \cite{Mishra2019}.
Moreover, the works \cite{arXiv1912} and \cite{wei2021channel}  focused on the channel estimation problem in IRS-aided multi-user MISO systems. Both of them exploited the
 correlation among the reflected channels to reduce the channel training overhead, and the latter further improved the scheme in  \cite{arXiv1912} by alleviating the negative effects caused by error propagation. In addition, several works designed channel estimation algorithms based on some special properties of the channels \cite{Jie2019,9370097,He2019}. In particular, the works
 \cite{Jie2019} and \cite{9370097}  formulated the channel estimation problem as a sparse channel matrix recovery problem using the compressive sensing (CS) technique. The work \cite{He2019} exploited the low-rank structure of the channels in massive multi-input multi-output (MIMO) systems and formulated  the reflected  channel estimation problem  as a combined sparse matrix factorization problem.
Furthermore, when the full instantaneous  CSI is not available, the statistical CSI can be  utilized to design advanced active and passive beamforming algorithms, which usually incurs much less channel estimation overhead \cite{YuboWan2020,9198125}.

In most of the aforementioned works, the quasi-static channel model is assumed, i.e., the channels are assumed to be approximately constant over a relatively long coherence time, such that accurate estimation of the instantaneous CSI is possible. However, in practice,
when the users are  with mobility, the channel coefficients  are more likely
to vary and be temporally correlated, which can be utilized to further  reduce the channel training overhead. For such time-varying channels, the existing channel estimation algorithms are not efficient  in general and may even be inapplicable.
 Instead, efficient  channel tracking methods are usually required  to be designed to obtain the time-varying CSI.
Channel tracking has been widely studied in the  literature for traditional wireless communication systems without IRS \cite{995063,8007240,8630098}. Specifically,
the work \cite{995063} studied the channel
tracking and equalization problem for
time-varying frequency-selective MIMO channels by  approximating  the MIMO channel variation using  a low-order autoregressive model and  tracking the approximated channel
via a Kalman filter (KF). In orthogonal frequency division multiplexing (OFDM) systems, when time-varying frequency selective channels are considered, the work \cite{8007240} proposed to first successively track the delay-subspace by KF and then track the channel impulse response.
The authors in \cite{8630098} considered an FDD massive MIMO system with limited scattering around the  base station (BS) and
 proposed a two-dimensional (2D) Markov model to capture the 2D dynamic sparsity of massive MIMO channels. An efficient message passing algorithm was derived  to recursively track the dynamic channel.
For IRS-aided communication systems, the work \cite{9366786} considered a time-varying channel model and
 proposed to track the downlink direct and reflected channel using two independent KFs. It was assumed in \cite{9366786} that  the reflected channel, i.e., the cascaded user-IRS-AP channel, follows the Gauss-Markov model. However, this assumption may not hold in practice. Although the individual user-IRS, IRS-AP and user-AP channels can be well approximately as Gauss-Markov models as reported in many related works, the cascaded user-IRS-AP channel is no longer Gaussian distributed.

{  In this work,
we advance the abovementioned works by studying the channel tracking and prediction (CTP) problem in an IRS-aided wireless communication system and considering a more general time-varying channel model. Specifically, the
 user-IRS, IRS-AP  and user-AP
 channels are assumed to be independent with each other and all follow
 stationary steady-state Gauss-Markov processes. A two-stage transmission protocol is proposed  which is able to reduce  the channel training overhead by exploiting the temporal correlation of the channels.
 Based on the proposed transmission protocol, we design an innovative two-stage CTP (2SCTP) scheme, where
a KF-based channel tracking algorithm and a deep learning (DL)-enabled channel prediction
algorithm are integrated for performance enhancement.
The main contributions of this work are summarized as follows:}
\begin{itemize}
\item {  First,
we propose a two-stage
 transmission protocol and a 2SCTP scheme.
%
   Specifically, the first stage contains two phases, in the channel training phase, the user sends pilot symbols  and the AP tracks the channel based on the received pilot signals, while in  the data transmitting phase,  data signals are sent at the AP and the user aims to recover the data based on the tracked channel obtained in the channel training phase.
 In the second stage,  the AP
 obtains the channel coefficients via prediction and all time slots in this stage are allocated for data transmission. Adopting the proposed two-stage transmission protocol is able to reduce the channel training overhead significantly, especially in the second stage.
 }

\item {  Second, for the special case where the IRS-AP channel is assumed to be approximately constant, we propose a KF-based algorithm for channel tracking and a long short-term memory (LSTM)-based neural network for channel prediction.\footnote{Considering this special case is meaningful since the locations of the  IRS and AP are usually fixed in practice and thus the IRS-AP channel can remain constant for a relatively long time interval.} Specifically,
the  {KF-based channel tracking algorithm} is derived by modeling the channel varying model as a linear state equation and using historical channel information
(i.e., the covariance matrix of the previously estimated channels). It is able to
 recover the high-dimensional channel vector from received pilot signals (i.e., observations), whose dimension is much lower.
  In the second stage, the LSTM-based neural network, {  namely the \emph{observation (OB)-LSTM}}, is designed to predict some imaginary observations, instead of directly predicting the channels, and then the KF-based algorithm is employed again to predict the channel coefficients based on these imaginary observations.}
  {  Note that predicting the imaginary observations and then employing the KF-based algorithm is more efficient than directly predicting the channels using LSTM  since the dimensions of the channel vectors are usually much larger than the observations, especially when the number of reflecting elements is large, which makes directly predicting the channel vectors very difficult. }


\item Third, we study the general case where  the user-IRS, IRS-AP  and user-AP channels are all assumed to be  time-varying.
    In this case,  the KF-based channel tracking algorithm cannot be directly applied, since not all the random variables (i.e., the channel coefficients) involved in the state equation are complex Gaussian distributed.
    To tackle this difficulty, we propose a novel approximation method  to approximate the underlying  non-Gaussian variables with Gaussian ones, based on which the state equation is transformed into a linear and Gaussian system.
Then, a generalized KF (GKF)-based channel tracking algorithm is proposed, and by combining it with the LSTM-based channel prediction algorithm, we show that the  2SCTP scheme is also effective in the considered more general case.


\item Finally, extensive numerical simulations are presented to demonstrate the effectiveness of the proposed 2SCTP scheme.
    We show that by exploiting the temporal correlation  of the channel, much lower channel training overhead can be achieved by the proposed 2SCTP scheme as compared to the existing channel estimation methods.
\end{itemize}

The rest of the paper is organized as follows. {  Section II
presents the system model and   the framework of the proposed 2SCTP scheme.}
Then, the proposed CTP algorithms for the special and general cases  are respectively  given in Section III and Section IV.  Numerical results are presented in Section V, and finally Section VI concludes the paper.

Notations:
Scalars, vectors and matrices are respectively denoted by lower (upper) case, boldface lower case and boldface upper case letters. For a matrix $\mathbf{X}$ of arbitrary size, $\mathbf{X}^T$, $\mathbf{X}^*$, $\mathbf{X}^H$ denote the transpose, conjugate and conjugate transpose  of $\mathbf{X}$, respectively. {  $[\mathbf{x}]_{n}$ represents the $n$-th element of the vector $\mathbf{x}$, and $\mathbf{X}_n$ denotes the $n$-th column of the matrix $\mathbf{X}$.} The symbol $||\cdot||$ denotes the Euclidean norm of a complex vector, and $|\cdot|$ is the absolute value of a complex scalar.
$\text{diag}(x_1,\cdots, x_N)$ denotes a diagonal matrix whose diagonal
elements are set as $x_1,\cdots,x_N$. ${\mathbb{C}}^{m\times n}$  denotes the space of $m \times n$ complex matrices. {   The notations $\mathbb{E}\big[\cdot\big]$ and  $\text{Var}\big[\cdot \big]$ represent the expectation and  variance of a random variable,  and  $\text{Cov}\big[\cdot,\cdot\big]$ denote  the covariance of two random variables. The remainder operation is  denoted by \%, i.e., $x_1\% x_2 = x_1 - x_2\lfloor \frac{x_1}{x_2}\rfloor$ with $\lfloor x\rfloor$ denoting the maximum integer that is smaller than $x$.}
The symbol $j$ is used to represent $\sqrt{-1}$.
 Finally, we define the complex Gaussian distribution with mean $\mu$ and variance ${\sigma}^2$  as $\mathcal{C}\mathcal{N}(\mu,{\sigma}^2)$.

\vspace{-0.5cm}\section{System Model and Transmission Protocol}\label{SecII}
\vspace{-0.0cm}\subsection{System Model}
\begin{figure}[b]
\vspace{-0.8cm}
\setlength{\belowcaptionskip}{-0.8cm}
\renewcommand{\captionfont}{\small}
\centering
\includegraphics[scale=.45]{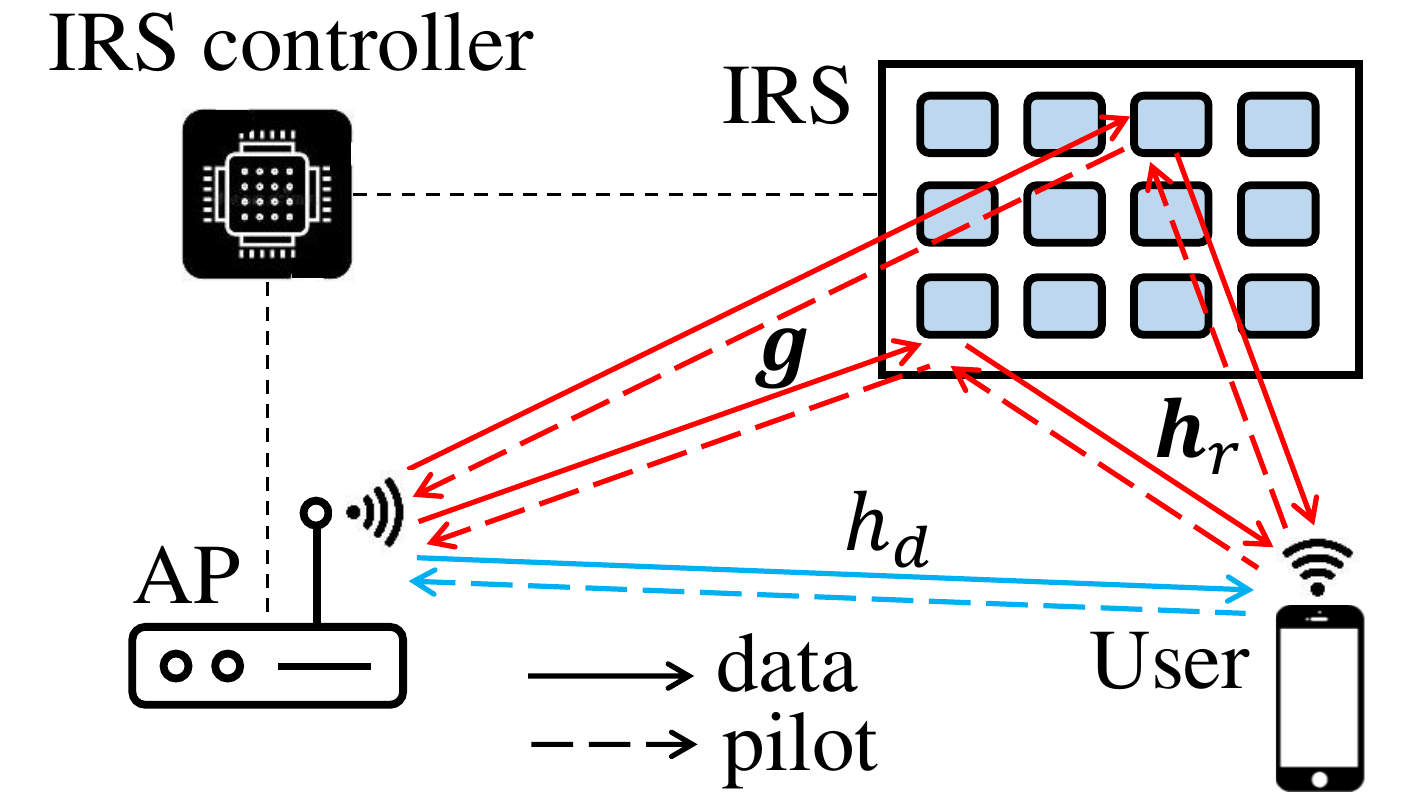}
\caption{  System model of the considered IRS-aided communication system.}
\label{SystemModel}
\normalsize
\end{figure}\begin{figure}[b]
\vspace{-0.8cm}
\setlength{\belowcaptionskip}{-0.5cm}
\renewcommand{\captionfont}{\small}
\centering
\includegraphics[scale=.43]{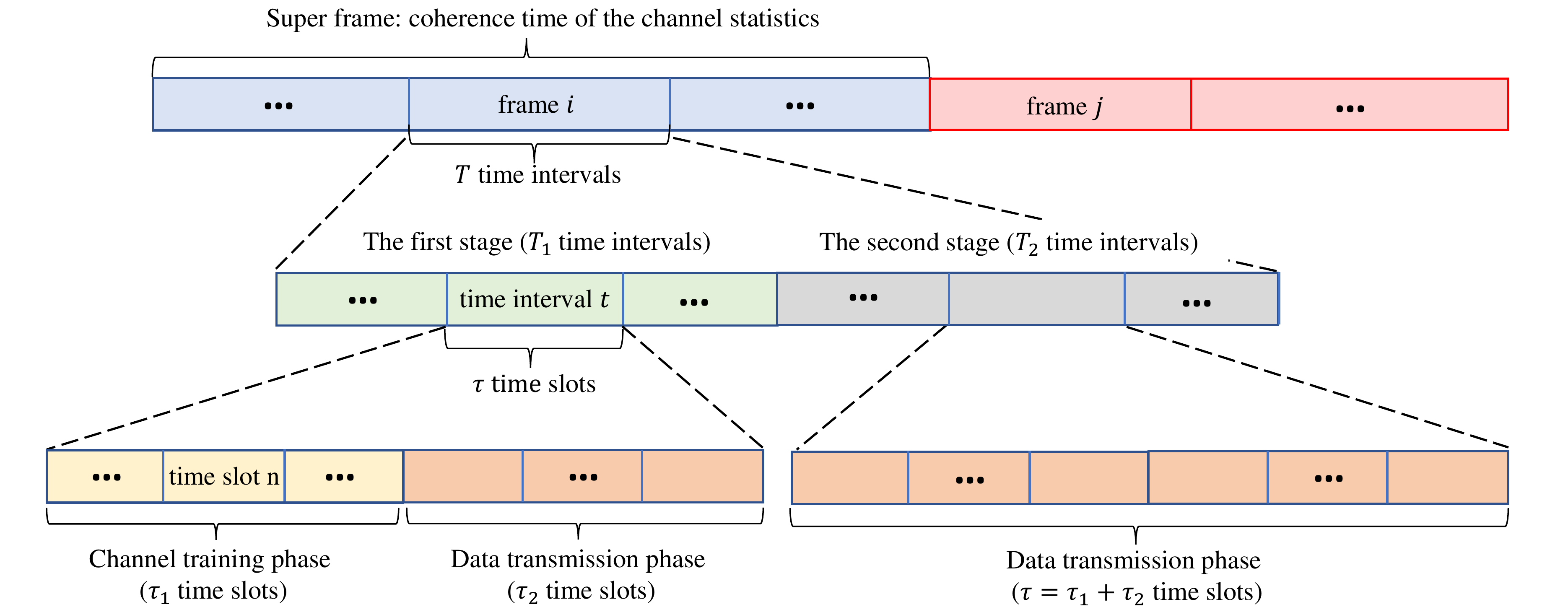}
\caption{  Illustration of the proposed frame structure.}
\label{frame}
\normalsize
\end{figure}
{  As shown Fig. \ref{SystemModel}, we consider an IRS-aided time-division duplexing (TDD) system where a single-antenna AP communicates with a single-antenna user via an IRS with $N$ reflecting elements.
 The CSI is obtained via uplink pilot transmission in the considered system  with the assumption of channel reciprocity.}
 Let $\mathbf{g}\in \mathbb{C}^{N \times 1}$, $\mathbf{h}_r \in \mathbb{C}^{N\times 1}$  and $h_d$ denote the   baseband equivalent IRS-AP, user-IRS and user-AP channels, respectively.
 {  Assume that  the channel statistics  will stay unchanged for a long period of time called a super frame, which includes a  number of frames.}
Each frame  is divided into $T$  time intervals and each time interval further consists of $\tau$ time slots, as illustrated in Fig. \ref{frame}. It is assumed that all the channels remain approximately constant in each time interval and vary from the current
time interval to the next.
Due to the insufficient angular spread of the scattering environment and closely spaced
antennas/reflecting elements, both  line-of-sight (LoS) and non-LoS (NLoS) components may exist in practical channels. As a result, the IRS-AP, user-IRS and user-AP channels in the $t$-th time interval can be respectively modeled as follows:
\vspace{-0.2cm}
\begin{equation}
\mathbf{g}(t) = \sqrt{\frac{l_{\text{IA}}\beta_{\text{IA}}}{1+\beta_{\text{IA}}}}\mathbf{g}^{\text{LoS}} +
\sqrt{\frac{l_{\text{IA}}}{1+\beta_{\text{IA}}}}\mathbf{g}^{\text{NLoS}},
\end{equation}
\vspace{-0.5cm}
\begin{equation}
\mathbf{h}_r(t) = \sqrt{\frac{l_{\text{UI}}\beta_{\text{UI}}}{1+\beta_{\text{UI}}}}\mathbf{h}_r^{\text{LoS}} +
\sqrt{\frac{l_{\text{UI}}}{1+\beta_{\text{UI}}}}\mathbf{h}_r^{\text{NLoS}},
\end{equation}
\vspace{-0.5cm}
\begin{equation}
{h}_d(t) = \sqrt{\frac{l_{\text{UA}}\beta_{\text{UA}}}{1+\beta_{\text{UA}}}}{h}_d^{\text{LoS}} +
\sqrt{\frac{l_{\text{UA}}}{1+\beta_{\text{UA}}}}{h}_d^{\text{NLoS}},
\end{equation}
where $\mathbf{g}^{\text{LoS}}$, $\mathbf{h}_r^{\text{LoS}}$ and ${h}_d^{\text{LoS}}$ represent the LoS components; $\mathbf{g}^{\text{NLoS}}$, $\mathbf{h}_r^{\text{NLoS}}$ and ${h}_d^{\text{NLoS}}$ denote the NLoS components; $\beta_{\text{IA}}$, $\beta_{\text{UI}}$ and $\beta_{\text{UA}}$ are the Rician factors of the IRS-AP, user-IRS and user-AP channels, respectively.
$l_{\text{IA}}$, $l_{\text{UI}}$ and $l_{\text{UA}}$ are the corresponding path losses, which are given by
$l_{\text{IA}} =l_{0}\left({d_{\text{IA}}}/{d_{0}}\right)^{-\gamma^{\text{IA}}}$, $l_{\text{UI}} =l_{0}\left({d_{\text{UI}}}/{d_{0}}\right)^{-\gamma^{\text{UI}}}$ and $l_{\text{UA}} =l_{0}\left({d_{\text{UA}}}/{d_{0}}\right)^{-\gamma^{\text{UA}}}$, respectively, where
$d_0$ represents the  reference distance and $l_0$ is the path loss at the reference distance, ${d_{\text{IA}}}$, ${d_{\text{UI}}}$ and ${d_{\text{UA}}}$ denote the link distances from the IRS to the AP, from the user to the IRS, and from the user to the AP, respectively; $\gamma^{\text{IA}}$, $\gamma^{\text{UI}}$ and $\gamma^{\text{UA}}$ denote the path-loss  exponents.
Equivalently,
we have $\mathbf{g}(t) \sim \mathcal{CN}(\bar{\mathbf{g}}, \mathbf{C}_{\text{IA}})$, $\mathbf{h}_r(t) \sim \mathcal{CN}(\bar{\mathbf{h}}_r, \mathbf{C}_{\text{UI}})$ and
${h}_d(t) \sim \mathcal{CN}(\bar{{h}}_d, {C}_{\text{UA}})$, where $\mathbf{C}_{\text{IA}} = \frac{l_{\text{IA}}}{1+\beta_{\text{IA}}} \mathbb{E}\big[\mathbf{g}^{\text{NLoS}}(\mathbf{g}^{\text{NLoS}})^H\big]$, $\mathbf{C}_{\text{UI}} = \frac{l_{\text{UI}}}{1+\beta_{\text{UI}}} \mathbb{E}\big[\mathbf{h}_r^{\text{NLoS}}(\mathbf{h}_r^{\text{NLoS}})^H\big]$ and
${C}_{\text{UA}} = \frac{l_{\text{UA}}}{1+\beta_{\text{UA}}} \mathbb{E}\big[{h}_d^{\text{NLoS}}({h}_d^{\text{NLoS}})^*\big]$.

Due to the mobility of the AP and/or user, the channels are usually time varying and exhibit a high degree of temporal correlation\cite{Telatar}, \cite{Borade2012}.
Therefore, we   employ
the independent stationary steady-state Gauss-Markov process \cite{Ziniel2013}, \cite{Lian2019} to model the temporal evolution of the channel parameters, which are given as follows:
\vspace{-0.4cm}
\begin{subequations}\label{CM}
\begin{equation}
\mathbf{g}(t) = \sqrt{1-\alpha_{\text{IA}}}\big(\mathbf{g}(t-1) - \bar{\mathbf{g}}\big)+ \sqrt{\alpha_{\text{IA}}}\mathbf{u}_{\textrm{IA}}(t) + \bar{\mathbf{g}},
\end{equation}
\vspace{-1.0cm}
\begin{equation}
\mathbf{h}_r(t) = \sqrt{1-\alpha_{\text{UI}}}\big(\mathbf{h}_r(t-1) - \bar{\mathbf{h}}_r\big) + \sqrt{\alpha_{\text{UI}}}\mathbf{u}_{\textrm{UI}}(t) + \bar{\mathbf{h}}_r,
\end{equation}
\vspace{-1.0cm}
\begin{equation}
{h}_d(t) = \sqrt{1-\alpha_{\text{UA}}}\big({h}_d(t-1)-\bar{h}_d\big) + \sqrt{\alpha_{\text{UA}}}{u}_{\textrm{UA}}(t)+\bar{h}_d,
\end{equation}
\end{subequations}

\vspace{-0.2cm}\noindent
where $\mathbf{u}_{\text{IA}}(t)\sim \mathcal{CN}(\mathbf{0},\sigma^2_{\text{IA}}\mathbf{I})$, $\mathbf{u}_{\text{UI}}(t)\sim \mathcal{CN}(\mathbf{0},\sigma^2_{\text{UI}}\mathbf{I})$ and
${u}_{\text{UA}}(t)\sim \mathcal{CN}({0},{\sigma}^2_{\text{UA}}\mathbf{I})$ are the perturbation terms in the IRS-AP, user-IRS and user-AP channels, respectively; $\alpha_{\text{IA}}$, $\alpha_{\text{UI}}$ and $\alpha_{\text{UA}}$ represent the temporal correlation coefficients of the IRS-AP, user-IRS and user-AP channels, respectively.
Note that the channel realizations   $\mathbf{g}(t-1)$, $\mathbf{h}_r(t-1)$ and ${h}_d(t-1)$ are statistically independent of $\mathbf{u}_{\text{IA}}(t)$, $\mathbf{u}_{\text{UI}}(t)$ and $u_{\text{UA}}(t)$.


In the $t$-th time interval, the received signal at the AP can be expressed as
\vspace{-0.2cm}
\begin{equation}\label{Rece}
y(t) = \sqrt{p}[\mathbf{h}_r^H(t)\bm{\Theta}(t)\mathbf{g}(t) + h_d(t)]s(t) + z(t),
\end{equation}

\vspace{-0.2cm}\noindent
where $s(t)$ denotes the transmit symbol, $p$ represents the transmit power, $z(t)$ denotes the additive white Gaussian
noise (AWGN) with zero-mean and variance $\sigma^2$,  $\bm{\Theta}(t)$ represents the reflection pattern at the IRS.
Let us define  $\mathbf{h}\triangleq [h_d; \text{diag}(\mathbf{h}_r^H)\mathbf{g}]$ as the user-AP  equivalent channel, then the received signal in \eqref{Rece} can be equivalently rewritten as
\vspace{-0.3cm}
\begin{equation}
y(t) = \sqrt{p}\mathbf{v}^H(t)\mathbf{h}(t)s(t) + z(t),
\end{equation}

\vspace{-0.4cm}\noindent
where $\mathbf{v}(t) \triangleq [1,\theta_1(t), \cdots, \theta_N(t)]^H$ with $\theta_n(t)$ denoting the $n$-th diagonal element  of the reflection pattern $\bm{\Theta}(t)$.
Notice that when the number of IRS reflecting elements is large, the acquisition of  the instantaneous  CSI  may cause considerable channel training/estimation overhead, which leads to reduced user transmission rate due to the limited time left for data transmission. To address this issue, we propose to exploit the temporal correlation  of the channel in this work  and present a 2SCTP scheme, where a KF-based channel tracking algorithm and a DL-enabled channel prediction   algorithm are integrated to reduce the channel training overhead and thus improve the transmission rate.

\vspace{-0.4cm}\subsection{Transmission Protocol}
The proposed two-stage transmission protocol is shown in Fig. \ref{frame}.  As can be seen, in the first stage that contains
 $T_1$ time intervals, each time interval is divided into two phases, i.e., the channel training phase (consists of $\tau_1$ time slots) and the data transmission phase (consists of $\tau_2 = \tau - \tau_1$ time slots).
More specifically, in  the first $\tau_1$ time slots of the first stage, the user sends pilot symbols that are known at the AP such that the AP can track the channel according to the received pilot signals, then  the remaining  $\tau-\tau_1$ time slots of each time interval are used for data transmission from the AP to the user.
In the second stage which includes {  $T_2 = T - T_1$} time intervals, there is no channel training phase and the channels are predicted such that all the time slots in these time intervals are utilized for data transmission.

As compared with the existing channel estimation methods, the proposed 2SCTP scheme is able to reduce  the channel training overhead in the following two ways:
\begin{itemize}
\item In the first stage, we  exploit the temporal correlation of the channels to reduce the channel training overhead in each time interval, thus $\tau_1$ can be much smaller than $N+1$.

\item In the second stage,    we predict the channels based on the statistical channel information extracted from the CSI  collected in the first stage, hence the channel training overhead can be completely  removed.
\end{itemize}
Based on the proposed transmission protocol, we propose the 2SCTP scheme, whose details are given in    the following two sections. {  Note that
at the beginning of each super frame, we need to rerun the proposed 2SCTP scheme to track the channel since the channel statistics change. Therefore, in the rest of the paper, we focus on the CTP algorithm design within one frame.
Besides,   the temporal correlation coefficients $\alpha_{\text{IA}}$, $\alpha_{\text{UI}}$ and $\alpha_{\text{UA}}$, the variances of    the perturbation terms $\sigma^2_{\text{IA}}$, $\sigma^2_{\text{UI}}$ and ${\sigma}^2_{\text{UA}}$, and the channel means $\bar{\mathbf{g}}$, $\bar{\mathbf{h}}_r$ and $\bar{h}_d$ are regarded as   prior knowledge of the proposed 2SCTP scheme.\footnote{  These parameters can be obtained by efficient  parameter estimation methods, e.g., the expectation-maximization (EM) algorithm \cite{mclachlan2007algorithm}, and further investigation is left for future work.}}
\vspace{-0.5cm}\section{2SCTP Scheme for the Special Case}\label{Sec3}
In this section, we focus on the  special case where   the
IRS-AP channel is assumed to change much slower than   the user-IRS and user-AP channels, i.e., $\alpha_{\text{IA}} \ll \alpha_{\text{UI}}$ and  $\alpha_{\text{IA}} \ll \alpha_{\text{UA}}$.
Furthermore, for simplicity, we assume  that the  user-AP, IRS-AP and user-IRS channels follow the Rayleigh channel model, extension to the more general Rician channel model will be discussed later. Based on these assumptions, the user-AP  equivalent    channel $\mathbf{h}(t)$ can be modeled as
\vspace{-0.3cm}
\begin{equation}\label{S1}
\mathbf{h}(t) = \mathbf{A}_h\mathbf{h}(t-1)+\mathbf{B}_h\mathbf{u}(t),
\end{equation}

\vspace{-0.2cm}\noindent
where
\begin{equation}
\mathbf{A}_h = \left[\begin{array}{cc}
\sqrt{1-\alpha_{\text{UA}}}  &  \mathbf{0}_{1\times N}\\
\mathbf{0}_{N\times 1} & \sqrt{1-\alpha_{\text{UI}}}\mathbf{I}
\end{array}
\right],
\mathbf{B}_h = \left[\begin{array}{cc}
\sqrt{\alpha_{\text{UA}}}  &  \mathbf{0}_{1\times N}\\
\mathbf{0}_{N\times 1} & \sqrt{\alpha_{\text{UI}}} \mathbf{I}
\end{array}
\right],
\end{equation}

\vspace{-0.2cm}\noindent
$\mathbf{u}(t)\sim \mathcal{CN}(\mathbf{0},\mathbf{C}_h)$ and $\mathbf{C}_h$ is given by
\vspace{-0.3cm}
\begin{equation}
\mathbf{C}_h = \left[
\begin{array}{cc}
\sigma^2_{\text{UA}} & \mathbf{0}_{1 \times N} \\
\mathbf{0}_{N\times 1} & \sigma^2_{\text{UI}}\text{diag}(\mathbf{g} \mathbf{g}^H )
\end{array} \right].
\end{equation}

\vspace{-0.2cm}\noindent
In this special case, although  the reflected channel $\mathbf{g}^H(t)\text{diag}\big(\mathbf{h}_r(t)\big)$ is related with the IRS-AP and user-IRS channels, it can be assumed to change according to one temporal correlation coefficient $\alpha_{\text{UI}}$ since $\alpha_{\text{IA}} \ll \alpha_{\text{UI}}$. At one extreme, if $\alpha_{\text{UA}} = \alpha_{\text{UI}} = 0$, the reflected channel parameters are totally correlated, (i.e., $\mathbf{h}(t) = \mathbf{h}(t-1)$),
while at the other extreme when $\alpha_{\text{UA}} = \alpha_{\text{UI}} = 1$, the reflected channel coefficients  evolve according to an uncorrelated Gaussian random process.

\vspace{-0.4cm}\subsection{KF-based Channel Tracking in Both Stages}
As a well-known and powerful  variable estimation method,  the KF method has been widely applied in time series analysis, such as signal processing and econometrics \cite{Paul2000,Ghysels2018}. The KF method  keeps track of the estimated state of the system and the uncertainty of the estimate, and only the estimated state from the previous time step and the current measurement are required to obtain  the current state estimate and the corresponding  uncertainty.
Inspired by the superiority of the KF method  in time series processing, we propose a KF-based channel tracking algorithm to estimate the time-varying channels    for the considered special case with limited channel training overhead. {Notice that the proposed channel tracking algorithm is employed in both stages.}

 {First of all, let us introduce the problem setting for a general KF method. The KF model assumes that the state of a system at time step $t$ evolved
from the prior state at time step $t-1$ according to the linear state equation
$
{x}_{t}={F}_{t} {x}_{t-1}+{w}_{t},
$
where $F_t$ is the state transition model, the state variable $x_t$ and state noise variable $w_t$ follow independent complex Gaussian distributions. Then, the observation at time step $t$ is obtained according to the linear observation equation
$
{z}_{t}={H}_{t} {x}_{t-1}+{r}_{t},
$
where $H_t$ is the measurement matrix, the observation variable $z_t$ and observation noise variable $r_t$ also follow complex Gaussian distributions. The KF method aims to  predict the  state variable by utilizing a series of observations.
}

{To apply the KF method to the considered  channel tracking problem, we can regard \eqref{S1}  as the  state equation.  {By  assuming that the pilot symbol is set   as  $s=1$ without loss of generality, the observation functions in the first and second stages are respectively given by
\vspace{-0.3cm}
\begin{equation}\label{y}
\mathbf{y}(t) = \sqrt{p}\mathbf{V}(t)\mathbf{h}(t) + \mathbf{z}(t),  \;\;\; t = 1,\cdots,T_1,
\end{equation}
\vspace{-1.0cm}
\begin{equation}\label{yi}
{\mathbf{y}}_i(t) = \sqrt{p}\mathbf{V}(t)\mathbf{h}(t) + \mathbf{z}(t),  \;\;\; t = T_1+1,\cdots,T,
\end{equation}

\vspace{-0.2cm}\noindent
 where the received signal $\mathbf{y}(t)\in \mathbb{C}^{\tau_1\times 1}$ and ${\mathbf{y}}_i(t)\in \mathbb{C}^{\tau_1\times 1}$ serve as the real and imaginary observations, respectively}, $\mathbf{V}(t) \triangleq [\mathbf{v}_1(t),\cdots, \mathbf{v}_{\tau_1}(t)]^T \in \mathbb{C}^{\tau_1\times (N+1)}$ is regarded as the  measurement matrix, and $\mathbf{v}_{i}(t)$ represents the reflection pattern in the $i$-th time slot of the $t$-th time interval.
{Note that the \emph{imaginary observation} is defined as the received signal if we imagine that the user sends the pilot symbol $1$ with the known reflection patterns  for $t = T_1 +1, ..., T$, and the details  of how to obtain (predict) the imaginary observations in $t = T_1 + 1,..., T$ from the real observations in $t = 1,...,T_1$ using the LSTM-based neural network will be elaborated    in Section \ref{OB}.}
Furthermore,   the value of  $\tau_1$, i.e.,  the dimension of the received signal $\mathbf{y}(t)$, will affect   the convergence of the proposed KF-based channel tracking algorithm and the required channel training overhead, thus it should be carefully selected.
{
Besides, to facilitate the prediction of the imaginary observations using the LSTM-based neural network and improve   the channel tracking performance, we employ a
periodic measurement matrix which satisfies
\vspace{-0.3cm}
\begin{equation}\label{Q}
\mathbf{V}(t) = [\mathbf{Q}_{(\tau_1(t-1)+1)\%(N+1)}, \cdots, \mathbf{Q}_{(\tau_1t)\%(N+1)}]^T,
\end{equation}

\vspace{-0.2cm}\noindent
where  each column is chosen  from a full-rank reference matrix $\mathbf{Q}\in \mathbb{C}^{(N+1)\times (N+1)}$. The design of the measurement matrix will be discussed in Section \ref{OB}, and
the effect of choosing different   $\tau_1$ and employing different reference matrices  will be investigated in Section \ref{Sec5}.}
{For clarity, we introduce
\vspace{-0.1cm}
\begin{equation}
\tilde{\mathbf{y}}(t) = \left\{ \begin{array}{ll}
 \mathbf{y}(t), &t = 1,\cdots, T_1, \\
 \mathbf{y}_i(t), &t = T_1+1,\cdots, T.
\end{array}\right.
\end{equation}
Let us define    $\mathbf{y}_t  \triangleq [\tilde{\mathbf{y}}^T(t),\tilde{\mathbf{y}}^T(t-1),\cdots,\tilde{\mathbf{y}}^T(1)]^T$ as the vector of observations and define $
\hat{\mathbf{y}}(t) \triangleq \tilde{\mathbf{y}}(t) - \mathbf{V}(t)\hat{\mathbf{h}}(t)$
as the innovation
with $
\hat{\mathbf{h}}(t) = \mathbb{E}\big[\mathbf{h}(t)|\mathbf{y}_t\big]$ denoting the predicted channel vector.} Then, the minimum mean square error (MMSE) estimate of $\mathbf{h}(t)$ can be recursively obtained by the KF equations \cite{Fundamentals1993}, where each iteration works in the following two-step process.
\begin{itemize}
\item In the {prediction step}, the estimate of the current state variable $\hat{\mathbf{h}}(t)$ along with its covariance matrix $\mathbf{M}(t)$ are predicted using the state estimated from the previous time interval, i.e.,
    \vspace{-0.3cm}
    \begin{equation}\label{KF1}
    \hat{\mathbf{h}}(t) = \mathbf{A}_h\mathbf{h}_{\text{KF}}(t-1),
    \end{equation}
    \vspace{-1cm}
    \begin{equation}\label{KF2}
    \mathbf{M}(t) = \mathbb{E}\big[\big(\mathbf{h}(t)-\hat{\mathbf{h}}(t) \big)\big( \mathbf{h}(t)-\hat{\mathbf{h}}(t)\big)^H  \big]
= \mathbf{A}_h\mathbf{M}_{\text{KF}}(t-1)\mathbf{A}_h^T + \mathbf{B}_h\mathbf{C}_h\mathbf{B}_h^T,
    \end{equation}

    \vspace{-0.2cm}\noindent
    where $\mathbf{h}_{\text{KF}}(t)$ represents the  refined state estimate in the $t$-th time interval (also named as
\emph{correction}), which will be introduced later,  and  $\mathbf{M}_{\text{KF}}(t) = \mathbb{E}\big[ \big(\mathbf{h}(t) - \mathbf{h}_{\text{KF}}(t-1)   \big)\big( \mathbf{h}(t) - \mathbf{h}_{\text{KF}} (t-1) \big)^H  \big]$ denotes the estimation covariance matrix.
\item In the {update step}, we first update the Kalman gain $\mathbf{G}_{\text{KF}}(t)$ as
    \vspace{-0.3cm}
\begin{equation}\label{KF3}
\mathbf{G}_{\text{KF}}(t) = \mathbf{M}(t)\mathbf{V}^H(t) \big(
\mathbf{V}(t)\mathbf{M}(t)\mathbf{V}^H(t)+\sigma^2\mathbf{I}\big)^{-1}.
\end{equation}

\vspace{-0.2cm}\noindent
Then, multiplying  the innovation $\hat{\mathbf{y}}(t)$   by the Kalman gain $\mathbf{G}_{\text{KF}}(t)$ and combining $\hat{\mathbf{y}}(t)$ with the state estimate $\hat{\mathbf{h}}(t)$, the correction $ \mathbf{h}_{\text{KF}}(t)$ can be obtained via
\vspace{-0.3cm}
\begin{equation}\label{KF4}
\mathbf{h}_{\text{KF}}(t)= \mathbb{E}\big[\mathbf{h}(t)|\mathbf{y}_{t-1}\big]
+ \mathbb{E}\big[\mathbf{h}(t)|\hat{\mathbf{y}}(t)\big] = \hat{\mathbf{h}}(t) + \mathbf{G}_{\text{KF}}(t)\hat{\mathbf{y}}(t).
\end{equation}

\vspace{-0.2cm}\noindent
 Finally,  the estimation covariance matrix $\mathbf{M}_{\text{KF}}(t)$    required to calculate the prediction covariance matrix $\mathbf{M}(t)$ in \eqref{KF3}, is updated by
 \vspace{-0.3cm}
 \begin{equation}\label{KF5}
 \mathbf{M}_{\text{KF}}(t) = \mathbb{E}\big[ \big(\mathbf{h}(t) - \mathbf{h}_{\text{KF}}(t-1)   \big)\big( \mathbf{h}(t) - \mathbf{h}_{\text{KF}} (t-1) \big)^H  \big] =\mathbf{M}(t) - \mathbf{G}_{\text{KF}}(t) \mathbf{V}^H(t)\mathbf{M}(t).
 \end{equation}
\end{itemize}

\vspace{-0.2cm}
To summarize,  the proposed  KF-based channel tracking algorithm is shown in Algorithm \ref{Kalman}, where we set $\mathbf{h}_{\text{KF}}(-1) = \mathbf{0}$ and $ \mathbf{M}_{\text{MF}} = \mathbb{E}\big[\mathbf{h}\mathbf{h}^H\big]$ to initialize the KF iterations.
Besides, Algorithm \ref{Kalman}  can be readily extended to the Rician channel model. Specifically, let $\bar{\mathbf{h}}$ represent the mean of $\mathbf{h}(t)$, and by substituting $\tilde{\mathbf{h}}(t) = \mathbf{h}(t) - \bar{\mathbf{h}}$ into \eqref{S1}, we obtain
\vspace{-0.3cm}
\begin{equation}\label{KalmanRicain}
\tilde{\mathbf{h}}(t) = \mathbf{A}_h\tilde{\mathbf{h}}(t-1)+\mathbf{B}_h\mathbf{u}(t).
\end{equation}

\vspace{-0.3cm}\noindent
 Regarding \eqref{KalmanRicain} as the state equation, the estimation of the state variable in the $t$-th time interval, i.e.,  $\tilde{\mathbf{h}}_{\text{KF}}(t)$,  can be obtained by directly  applying Algorithm \ref{Kalman}, after which we can acquire the refined state estimate in the  $t$-th time interval as  $\mathbf{h}_{\text{KF}}(t) = \tilde{\mathbf{h}}_{\text{KF}}(t) + \bar{\mathbf{h}}$.

\begin{algorithm}[t]
\setlength{\belowcaptionskip}{-2cm}
\setlength{\abovecaptionskip}{-2cm}
\caption{KF-based Channel Tracking Algorithm} 
\label{Kalman}
\hspace*{0.02in} {\bf Input:} covariance matrix  $\mathbf{C}_h = \mathbb{E}\big[\mathbf{h}\mathbf{h}^H\big]$ and observations {  $\tilde{\mathbf{y}}(t)$} \\ 
\hspace*{0.02in} {\bf Output:} {  Estimated channel vectors $\mathbf{h}_{\text{KF}}(t)$}
\begin{algorithmic}[1]
\State \textbf{Initialization:} $t = 0$, $\mathbf{h}_{\text{KF}}(-1) = \mathbf{0}$, $ \mathbf{M}_{\text{MF}} = \mathbb{E}\big[\mathbf{h}\mathbf{h}^H\big]$
\While {{  $t >= 1$}}
\State \emph{Prediction step}:

Calculate the prediction $\hat{\mathbf{h}}(t)$  according to  \eqref{KF1},

Calculate the prediction covariance matrix $\mathbf{M}(t)$ according to  \eqref{KF2},

\State \emph{Update step}:

Update the Kalman gain vector $\mathbf{G}_{\text{KF}}(t)$ according to  \eqref{KF3},

Calculate the correction $\mathbf{h}_{\text{KF}}(t)$ according to  \eqref{KF4},

Calculate the estimation covariance matrix $\mathbf{M}_{\text{KF}}(t)$ according to  \eqref{KF5},

\State $t = t+1$,
\EndWhile
\end{algorithmic}
\end{algorithm}

\vspace{-0.4cm}\subsection{DL-enabled Channel Prediction in the Second Stage}\label{OB}
Solving the channel prediction problem can be interpreted as a time series processing task.
Recently,  DL techniques  designed for time series processing  have been widely used to solve channel prediction problems, e.g.,
using recurrent neural network (RNN) for frequency-domain channel prediction \cite{8746352},
deploying  LSTM networks to predict  fading channels \cite{Prediction9044427}, etc.
However, predicting a high-dimensional channel vector   in massive MIMO systems or IRS-aided communication systems is not easy, and it is challenging to achieve high prediction accuracy. Moreover, in order to process the high-dimensional data, the prediction neural network requires a huge number of neurons/layers to ensure adequate network capacity. This will lead to excessive computing resources and memory costs.
{Assume that the varying reflection patterns $\mathbf{v}(t),t = 1,\cdots,T$, are known at the  AP and user, we can obtain the {imaginary observation} $\mathbf{y}_i(t)$ in the $t$-th time interval  according to \eqref{yi}.}
Then, to tackle  the abovementioned
challenges, we propose to design a novel OB-LSTM network to predict the low-dimensional imaginary observations, instead of the high-dimensional channel vector.
 These predicted imaginary observations are then regarded as the input of the KF-based channel tracking algorithm, which in turn outputs the current channel vector.
 Based on the above strategy,  no pilot symbols are required to be transmitted in the second stage, thus the channel training overhead is zero in this stage and the transmission rate of the user is maximized.


The detailed structure of the proposed OB-LSTM network is provided as follows. {First, let $L_I$ and $L_P$ denote
the lengths of the input observations and predicted imaginary observations, respectively, where $L_I\le T_1$ and $L_P\le T_2$.}
Then, in the $t$-th time interval for $t = T_1,\cdots,T$, we feed  $L_I$ observations $\{\tilde{\mathbf{y}}(t-L_{I}+1),\cdots, \tilde{\mathbf{y}}(t)\}$ into the OB-LSTM network and train it  to predict the next $L_P$ imaginary observations, i.e., $\{ {\mathbf{y}}_i(t + 1),\cdots, {\mathbf{y}}_i(t+L_P) \}$. Since the observations are related to both the unknown channels and the known reflection matrix $\mathbf{V}(t)$, we also treat $\mathbf{V}(t)$ as the part of the network input.
Besides,  the historical observations are pre-processed by normalization to
improve the performance of the multilayer perceptron models in the proposed OB-LSTM network. After data normalization,  the overall input data sequence is given by  $\{\mathbf{x}(1),\cdots,\mathbf{x}(t),\cdots, \mathbf{x}(L_I)\}$, where
$\mathbf{x}(t) = [F_n(\tilde{\mathbf{y}}^T(t)),F_n([\mathbf{v}_1^T(t),\cdots,$ $\mathbf{v}_{\tau_1}^T(t) ])]^T \in \mathbb{C}^{D_I\times 1}$ with $D_I \triangleq \tau_1(N+1) $ and $F_n(\cdot)$ denoting the dimension of the input   and the normalization function, respectively. {Accordingly,  we collect a set of real observations and construct the training data-label sample set    as $\big\{ \{\mathbf{x}(j+1),\cdots, \mathbf{x}(j+L_I)\},  \{\mathbf{y}(j+L_I + 1),\cdots, \mathbf{y}(j+L_I+L_P)\} \big\}_{j=0}^{J-1}$.}


Different from classical   deep learning applications, e.g., computer vision and natural language processing, complex data format is usually  considered in wireless communications. To handle the complex input data, the proposed OB-LSTM network is designed to contain two sub-networks with the same structure, which are called the  R sub-network and the I sub-network, respectively.\footnote{Alternatively, we can combine  the real and imaginary parts of the input  observations   into one vector as the network input, such that no sub-networks are needed.   However, the proposed sub-network structure can achieve similar performance with less network parameters, thus is more favorable here.}
In the following, we only focus on the structure of the R sub-network since the I sub-network can be similarly designed. Specifically,
 each sub-network contains one input fully-connected layer, $K$ LSTM layers and one output fully-connected layer.  The input fully-connected layer is designed to augment   the input dimension of the LSTM units, which can help to extract the high-dimensional features of the input data and improve the prediction performance.
Let $\epsilon$ represent  an expansion  factor, then the output of the input fully-connected layer, denoted by $\mathbf{x}_h(t) \in \mathbb{R}^{\epsilon D_I\times 1}$,   can be expressed as
\vspace{-0.3cm}
\begin{equation}
\mathbf{x}_{h}(t) = \text{ReLU}\big(\mathbf{W}_e\mathbf{x}(t) + \mathbf{b}_e\big),
\end{equation}

\vspace{-0.3cm}\noindent
where $\mathbf{W}_e\in {\mathbb{C}}^{\epsilon D_I\times D_I}$ and $\mathbf{b}_e \in \mathbb{C}^{\epsilon D_I\times 1}$ denote the weight and bias (i.e., the learnable network parameters), respectively, and   $\text{ReLU}(\cdot)$ is employed  as the activation function.
The LSTM layers are the core of the proposed network, since they are developed to deal with the vanishing gradient problem that might be encountered when training traditional RNNs and are applicable to tasks such as classifying, processing and  predicting based on time series data \cite{lstm}.
 In this work, one LSTM layer consists of $L_I$ cascading  LSTM units. Each LSTM unit is composed of a cell, an input gate, an output gate and a forget gate, where the cell remembers information over arbitrary time interval and the three gates regulate the flow of information into and out of the cell. As shown in Fig. \ref{LSTM}, in the $(t,k)$-LSTM unit (i.e., the $t$-th LSTM unit in the $k$-th LSTM layer),  the forget gate $\mathbf{f}_{t,k}$, the input gate $\mathbf{i}_{t,k}$, the output gate $\mathbf{o}_{t,k}$, the cell state $\mathbf{c}_{t,k}$ and the output $\mathbf{y}_t$
  are calculated as follows:
\begin{figure}[t]
\vspace{-0.5cm}
\setlength{\belowcaptionskip}{-0.9cm}
\renewcommand{\captionfont}{\small}
\centering
\includegraphics[scale=.30]{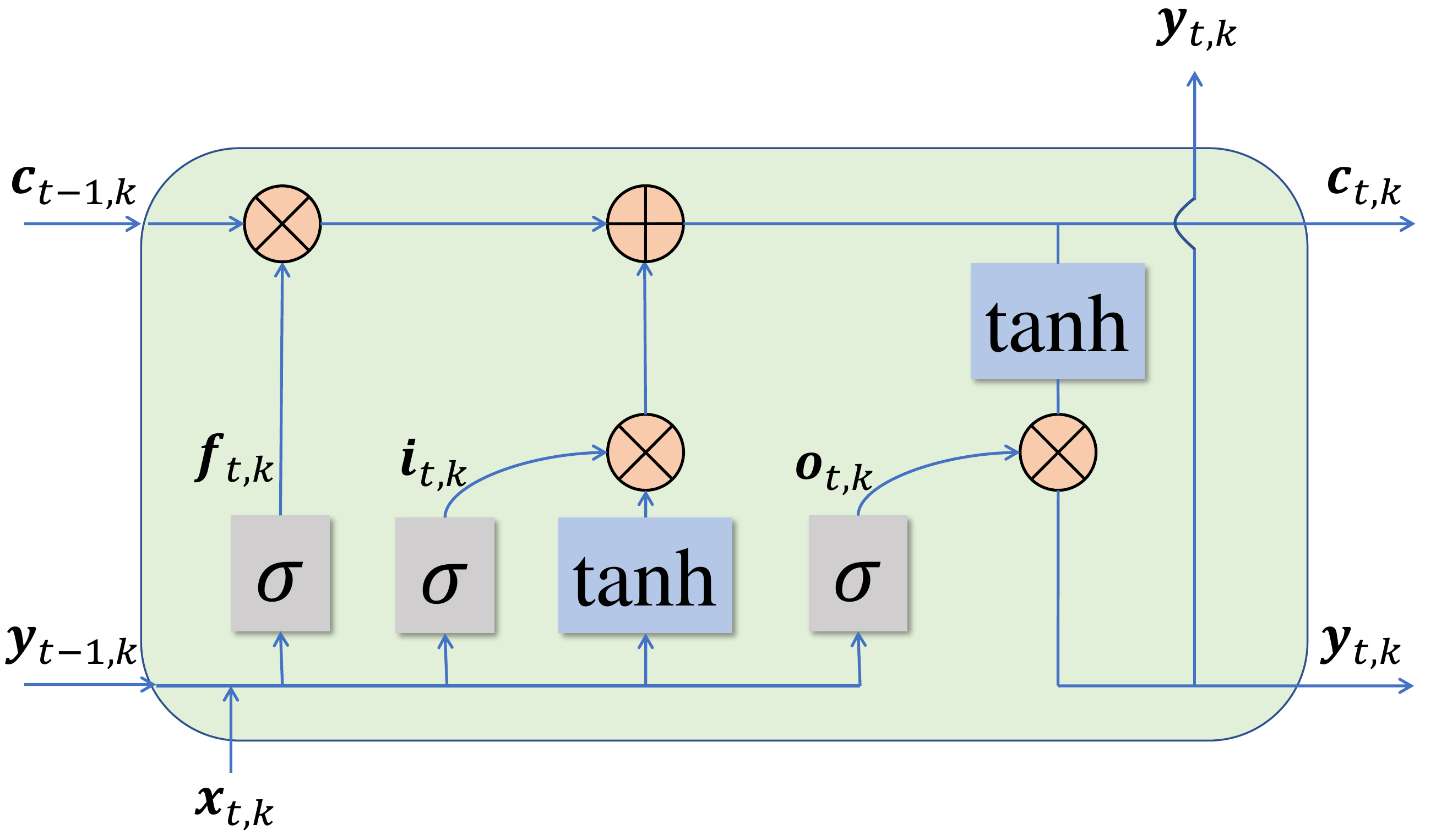}
\caption{The structure of one LSTM unit.}
\label{LSTM}
\normalsize
\end{figure}
\vspace{-0.3cm}
  \begin{equation}
  \begin{aligned}
  &\mathbf{f}_{t,k}=\sigma \left(\mathbf{W}_{t,k}^{fx} \mathbf{x}_{t,k} + \mathbf{W}_{t,k}^{fy} \mathbf{y}_{t-1,k}+\mathbf{b}_{t,k}^{f}\right),\\
  &\mathbf{i}_{t,k}=\sigma\left(\mathbf{W}_{t,k}^{ix} \mathbf{x}_{t,k} + \mathbf{W}_{t,k}^{iy}\mathbf{y}_{t-1,k}+\mathbf{b}_{t,k}^{i}\right),\\
 & \mathbf{o}_{t,k}=\sigma\left(\mathbf{W}_{t,k}^{ox} \mathbf{x}_{t,k} + \mathbf{W}_{t,k}^{oy} \mathbf{y}_{t-1,k}+\mathbf{b}_{t,k}^{o}\right),\\
  &\mathbf{c}_{t,k}=\mathbf{f}_{t,k} \otimes \mathbf{c}_{t-1,k}+\mathbf{i}_{t,k} \otimes \tanh \left(\mathbf{W}_{t,k}^{cx}\mathbf{x}_{t,k} + \mathbf{W}_{t,k}^{cy} \mathbf{y}_{t-1,k}+\mathbf{b}_{t,k}^{c}\right),\\
  &\mathbf{y}_{t,k}=\mathbf{o}_{t,k} \otimes \tanh \left(\mathbf{c}_{t,k}\right),
  \end{aligned}
  \end{equation}

  \vspace{-0.2cm}\noindent
where $\mathbf{x}_{t,k}$ denotes the input of this LSTM unit; $\mathbf{W}_{t,k}^{\kappa} \in  \mathbb{R}^{\epsilon D_I \times 1}, \kappa \in \{fx, fy, ix, iy, ox, oy, cx, cy\}$
and $\mathbf{b}_{t,k}^\nu \in \mathbb{R}^{\epsilon D_I \times 1},
\nu \in \{f,i,o,c\}$
 represent the corresponding  weights and biases, respectively; the sigmoid function  $\sigma(\cdot)$  and the tanh function $\text{tanh}(\cdot)$ are considered as the activation functions of the gates and the cell, respectively.  Note that the  output of $\sigma(\cdot)$ varies in $[0,1]$, hence this activation function describes how much information can pass through this gate.
 Finally, the output fully-connected layer transforms the $L_I$ outputs of the last LSTM layer, i.e., $\mathbf{y}_{1,K},\cdots, \mathbf{y}_{L_I,K}$, into the prediction of the  $L_P$ imaginary observations, i.e.,
 \vspace{-0.3cm}
 \begin{equation}
 \mathbf{y}_p = \mathbf{W}_p[\mathbf{y}_{1,K}^T,\cdots,\mathbf{y}_{L_I,K}^T]^T + \mathbf{b}_p,
 \end{equation}

 \vspace{-0.2cm}\noindent
 where $\mathbf{y}_p \in \mathbb{R}^{\tau_1 L_P\times 1}$ denotes the output, $\mathbf{W}_p\in \mathbb{R}^{\tau_1 L_P \times \epsilon D_I L_I}$ and $\mathbf{b}_p \in  \mathbb{R}^{\tau_1 L_P \times 1}$ represent the corresponding weight and bias. The  outputs of the R and I sub-networks  are regarded as the real and imaginary parts, which are then  utilized to  construct the final $L_P$ predicted imaginary observations $\hat{\mathbf{y}}(L_I+1),\cdots, \hat{\mathbf{y}}(L_I+L_P)$.

 To summarize, the output of the proposed OB-LSTM network can be expressed as
  \vspace{-0.3cm}
 \begin{equation}
 \{\hat{\mathbf{y}}(j+L_I+1), \cdots, \hat{\mathbf{y}}(j+L_I+L_P)\} = \text{OB-LSTM}
 \big(\{\mathbf{x}(j+1),\cdots, \mathbf{x}(j+L_I)\}; \{ \bm{\Theta}_R, \bm{\Theta}_I \}\big),
 \end{equation}

  \vspace{-0.2cm}\noindent
 where $\bm{\Theta}_R$  $(\bm{\Theta}_I)$ represents the collection of network parameters included in the R (I) sub-network.
\begin{figure}[t]
\vspace{-1.0cm}
\setlength{\belowcaptionskip}{-0.8cm}
\renewcommand{\captionfont}{\small}
\centering
\includegraphics[scale=.25]{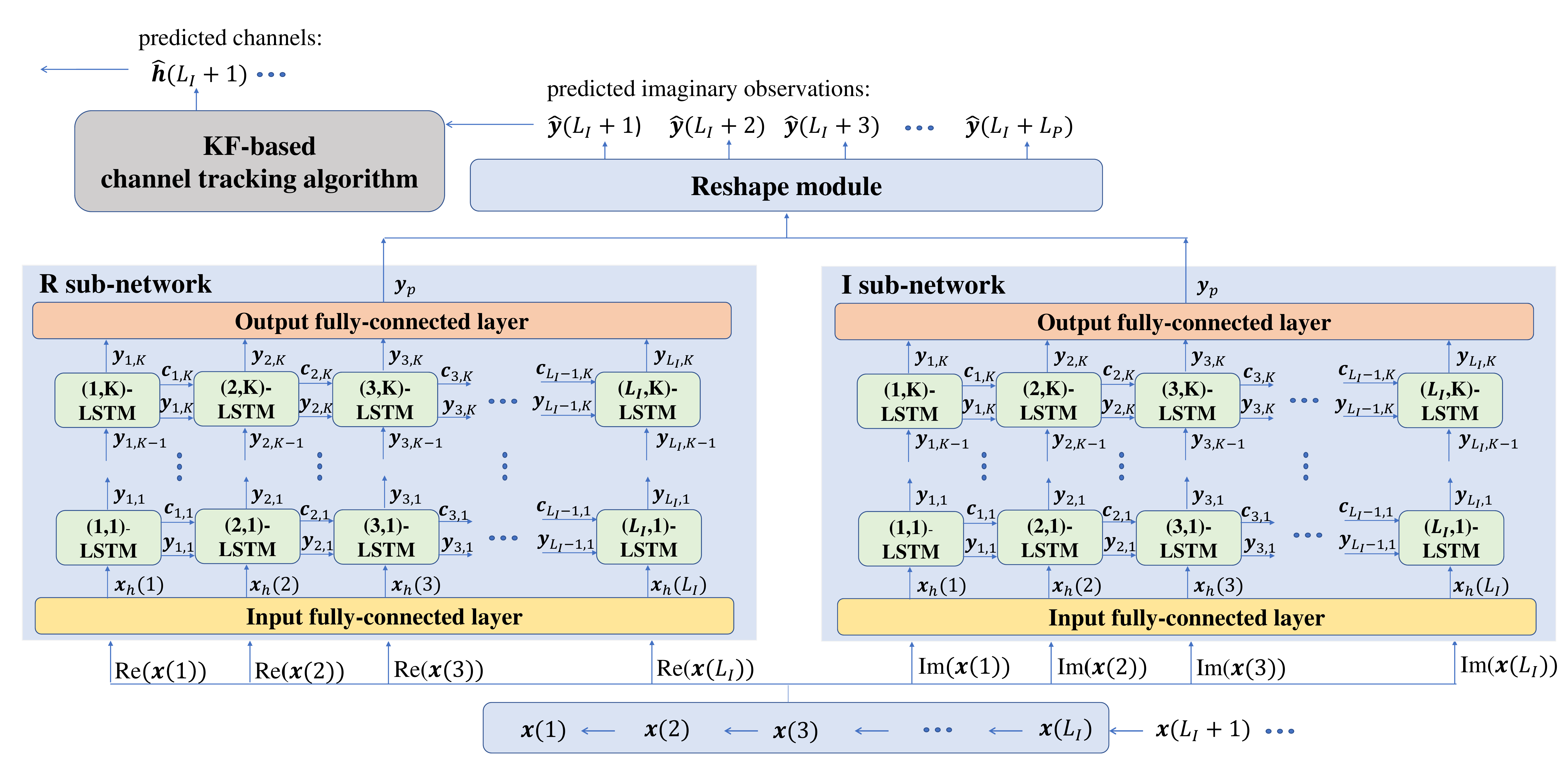}
\caption{Overall procedure of the proposed channel prediction method.}
\label{CPstage}
\normalsize
\end{figure}
The overall procedure of the proposed channel prediction method is depicted in Fig. \ref{CPstage}, where the structure of the proposed OB-LSTM network is also illustrated.

{  In this work, we adopt the supervised learning strategy and construct the following loss function which is  based on the $\ell_2$-norm of the difference between the future observations and their estimates}
 \vspace{-0.3cm}
\begin{equation}\label{L2}
   \mathcal{L}_2(\bm{\Theta}_{R},\bm{\Theta}_{I}) = \frac{1}{J}\sum_{j=0}^{J-1} \sum_{k = 1}^{T_P}\|\hat{\mathbf{y}}(j +L_I+k) -  \mathbf{y}(j+L_I +k) \|_2^2.
\end{equation}

 \vspace{-0.3cm}\noindent
All the network parameters in the proposed OB-LSTM network are updated by minimizing this loss function using the Adam optimizer \cite{Adam}.
{It is noteworthy that the parameters of the  OB-LSTM network are learned through  offline training. When the channel statistics change,  the user should send pilot symbols in the first several frames of a super frame,   as such a set of real observations can be obtained, based on which  we can further finetune the network parameters to alleviate the performance loss caused by model mismatch.}

{  When the training is finished, the OB-LSTM network can be flexibly integrated into the proposed 2SCPT scheme. {We take the case of $T_1 = 6$ and $T_2 = 6$ as a toy example to elaborate this.    The most simple strategy (also called \emph{Strategy A}) is to let $L_I = T_1=6$ and $L_P = T_2=6$, then $6$ real observations, i.e., $\mathbf{y}(1),\cdots, \mathbf{y}(6)$, collected in the first stage can be directly input into  the OB-LSTM network to predict  $6$ imaginary observations, i.e., $\mathbf{y}_i(7),\cdots, \mathbf{y}_i(12)$. However, in this strategy, if the value of  $T_1$ or $T_2$ is large, the
 computational complexity of the OB-LSTM network
 will be high since the input/output dimensions and the number of neurons/layers are large.
 Thus, it would be
 difficult to train such a network with good prediction performance. An alternative strategy (also called \emph{Strategy B}) is to set  $L_P < L_I\le T_1$, e.g., $L_I = 6$ and $L_P = 1$, such that  the predicted imaginary observations can be reused as the network input. Specifically, in the first time interval of the second stage, the OB-LSTM network predicts the next imaginary observation $\mathbf{y}_i(7)$ based on the last 6 real observations $\mathbf{y}(1),\cdots, \mathbf{y}(6)$ collected in the first stage, while in the subsequent time intervals, we can construct the input data using 1  imaginary observation and the last $5$ real observations, i.e., $\mathbf{y}(2),\cdots, \mathbf{y}(6),\mathbf{y}_i(7)$, to obtain $\mathbf{y}_i(8)$.}
Note that the hyper-parameters of the OB-LSTM network, e.g., $L_I$, $L_P$ and $\epsilon$, and the employed  strategies should be carefully chosen  to balance the prediction performance, the computational complexity of the network and the required channel training overhead.  Investigation into the impacts of different   hyper-parameters and strategies will be shown in Section \ref{Sec5}.

\vspace{-0.3cm}
 \begin{remark}
When we consider the case where the AP is deployed with  multiple antennas, the user-AP equivalent channel matrices, i.e., the reflected channel matrices and the direct channel vectors, are required to be tracked and predicted based on the observations.
By vectorizing the channel matrices and constructing the observation functions  as in  \eqref{y} and \eqref{yi}, the proposed 2SCTP scheme can be readily extended to this more general case.
 \end{remark}

\vspace{-0.5cm}
\section{2SCTP Scheme For the General Case}
{  In this section, we extend the proposed 2SCTP scheme to the  more general case where the IRS-AP, user-IRS and user-AP channels are all assumed to be  time-varying. In this case, the user-AP equivalent   channel $\mathbf{h}(t)$ can be written as
\vspace{-0.3cm}
\begin{equation}\label{S2}
\mathbf{h}(t) = [
\sqrt{1-\alpha_{\text{UA}}}{h}_d(t-1) + \sqrt{\alpha_{\text{UA}}}{u}_{\textrm{UA}}(t),
\tilde{{h}}_1(t),
\cdots,
\tilde{{h}}_N(t)]^T,
\end{equation}

\vspace{-0.2cm}\noindent
where $\tilde{h}_n(t), n \in [1,N]$ consists of four random variables and is given by
\vspace{-0.3cm}
\begin{equation}
\begin{aligned}
\tilde{{h}}_n(t) = &\sqrt{(1-\alpha_{\text{IA}})(1-\alpha_{\text{UI}})} [\mathbf{g}(t-1)]_n [\mathbf{h}_r(t-1)]_n +
\sqrt{\alpha_{\text{IA}}(1-\alpha_{\text{UI}})} [\mathbf{u}_{\textrm{IA}}(t)]_n [\mathbf{h}_r(t-1)]_n \\
&+
\sqrt{\alpha_{\text{UI}}(1-\alpha_{\text{IA}})}[\mathbf{g}(t-1)]_n[\mathbf{u}_{\textrm{UI}}(t)]_n +
\sqrt{\alpha_{\text{UI}}\alpha_{\text{IA}}} [\mathbf{u}_{\textrm{IA}}(t)]_n [\mathbf{u}_{\textrm{UI}}(t)]_n.
\end{aligned}
\end{equation}

\vspace{-0.2cm}\noindent
Then,  we can simplify  the state equation \eqref{S2} as
\vspace{-0.3cm}
\begin{equation}\label{S3}
\mathbf{h}(t)  = \mathbf{A}_g\mathbf{h}(t-1) + \mathbf{u}_g(t),
\end{equation}

\vspace{-0.2cm}\noindent
where
\vspace{-0.1cm}
\begin{equation}
\mathbf{A}_g = \left[\begin{array}{cc}
\sqrt{1-\alpha_{\text{UA}}}  &  \mathbf{0}_{1\times N}\\
\mathbf{0}_{N\times 1} & \sqrt{(1-\alpha_{\text{IA}})(1-\alpha_{\text{UI}})}\mathbf{I}
\end{array}
\right],
\end{equation}

\vspace{-0.2cm}\noindent
 $\mathbf{u}_g(t)$ represents the composite noise variable that satisfies $[\mathbf{u}_g(t)]_1 = {u}_{\textrm{UA}}(t)$ and $[\mathbf{u}_g(t)]_n = \sqrt{\alpha_{\text{IA}}(1-\alpha_{\text{UI}})}$ $ [\mathbf{u}_{\textrm{IA}}(t)]_n [\mathbf{h}_r(t-1)]_n +
\sqrt{\alpha_{\text{UI}}(1-\alpha_{\text{IA}})}[\mathbf{g}(t-1)]_n[\mathbf{u}_{\textrm{UI}}(t)]_n +
\sqrt{\alpha_{\text{UI}}\alpha_{\text{IA}}} [\mathbf{u}_{\textrm{IA}}(t)]_n$ $ [\mathbf{u}_{\textrm{UI}}(t)]_n, n \in [2,N+1]$.
 The covariance matrix of $\mathbf{u}_g(t)$ is denoted as  $\mathbf{C}_g = \mathbb{E}\big[\mathbf{u}_g \mathbf{u}_g^H\big]$.
It is readily seen that although the  state equation \eqref{S3} is linear, the elements in $\mathbf{h}(t)$ do not follow complex Gaussian distributions in general, which implies that
the state variable $\mathbf{h}(t)$ and   noise variable $\mathbf{u}_g(t)$  are not Gaussian vectors. This is quite different from
 the state equation  \eqref{S1} discussed in the
special case, and  the KF-based channel tracking algorithm cannot be directly applied to this general case. Note that for the non-Gaussian scenario, the particle filter (PF) method is widely-used \cite{978374}, such as in target tracking, signal processing and automatic control, etc.   The PF method uses a set of particles, i.e., observations, to represent the posterior distribution of the state variable which follows an arbitrary distribution.  However, collecting  these particles can be quite time-consuming, thus applying the   PF method  for the general case can be very inefficient. }

To tackle this difficulty, we propose in this paper a GKF-based channel tracking algorithm for the general case.\footnote{The proposed GKF-based  channel tracking algorithm can be extended to handle other channel models as well as long as  the state   equation is linear.}
Our idea is inspired by the extended KF (EKF) and unscented KF (UKF) methods \cite{9044427}, both of which are designed for non-linear systems. Their difference is that the EKF method employs multivariate Taylor series expansions to linearize the state equation at the working point, while  the UKF method tries to approximate the probability density function of the output of the non-linear function  included in the system by a number of deterministic sampling  points which represent the underlying distribution as a Gaussian distribution. Then, the original KF method can work on these modified or linearized systems.
In the considered problem, our state equation and observation function are both linear, but the composite  noise variable $\mathbf{u}_g(t)$ in the state equation is not Gaussian.
Therefore,   some necessary modifications should be made on
 $\mathbf{u}_g(t)$  such that the resulting problem can be solved by  the KF method.
Specifically, to tackle the difficulty caused by  the non-Gaussian noise variable  $\mathbf{u}_g(t)$,
 we propose to use a complex Gaussian distribution to approximate the distribution of  $\mathbf{u}_g(t)$  in each time interval. Then, the predicted state variable $\mathbf{h}(t)$, obtained from the linear state equation \eqref{S3}, also follows a complex Gaussian distribution. With the help of such an approximation, we can apply the  KF method to address the resulting channel tracking problem.
In the following, we will first present the  proposed approximation method and then introduce the GKF-based channel tracking algorithm.
\vspace{-0.8cm}\subsection{Channel Distribution Analysis}\label{Sec41}

{
In this subsection, we aim to  analyse the distribution of the noise variable $\mathbf{u}_g(t)$ included in the state function \eqref{S3}.
First, it is readily seen that $[\mathbf{u}_g(t)]_1$ follows the complex Gaussian distribution, i.e., $[\mathbf{u}_g(t)]_1\sim \mathcal{CN}(0,\alpha_{\text{UA}}\beta_{\text{UA}})$. The other elements in $\mathbf{u}_g(t)$ can be expressed as the sum of three random variables, i.e.,
\vspace{-0.3cm}
\begin{equation}
[\mathbf{u}_{g}(t)]_n = [\mathbf{u}_{g,1}(t)]_n+[\mathbf{u}_{g,2}(t)]_n+[\mathbf{u}_{g,3}(t)]_n,
\end{equation}

\vspace{-0.2cm}\noindent
where $[\mathbf{u}_{g,1}(t)]_n \triangleq  \sqrt{\alpha_{\text{IA}}(1-\alpha_{\text{UI}})} [\mathbf{h}_r(t-1)(t)]_n [\mathbf{u}_{\textrm{IA}}(t)]_n$,
$[\mathbf{u}_{g,2}(t)]_n \triangleq  \sqrt{\alpha_{\text{UI}}(1-\alpha_{\text{IA}})} [\mathbf{g}(t-1)]_n[\mathbf{u}_{\textrm{UI}}(t)]_n$
and $ [\mathbf{u}_{g,3}(t)]_n \triangleq \sqrt{\alpha_{\text{IA}}\alpha_{\text{UI}}} [\mathbf{u}_{\textrm{IA}}(t)]_n [\mathbf{u}_{\textrm{UI}}(t)]_n$.
We can see that  $[\mathbf{u}_{g,1}(t)]_n$ and $[\mathbf{u}_{g,2}(t)]_n$ both follow complex Gaussian distributions, i.e.,
 $[\mathbf{u}_{g,1}(t)]_n \sim \mathcal{CN}(0, \alpha_{\text{IA}}(1-\alpha_{\text{UI}})\sigma^2_{\text{IA}}|[\mathbf{h}_r(t-1)]_n|^2)$, $ [\mathbf{u}_{g,2}(t)]_n \sim \mathcal{CN}(0, \alpha_{\text{UI}}(1-\alpha_{\text{IA}})|[\mathbf{g}_r(t-1)]_n|^2\sigma^2_{\text{UI}})$,
   while the third random variable  $[\mathbf{u}_{g,3}(t)]_n$ is  the product of two independent complex Gaussian random variables, i.e., $\sqrt{\alpha_{\text{IA}}} [\mathbf{u}_{\textrm{IA}}(t)]_n$ and $\sqrt{\alpha_{\text{UI}}} [\mathbf{u}_{\textrm{UI}}(t)]_n$.


  \begin{figure}[t]
\vspace{-0.0cm}
\setlength{\belowcaptionskip}{-0.5cm}
\renewcommand{\captionfont}{\small}
\centering
\includegraphics[scale=.52]{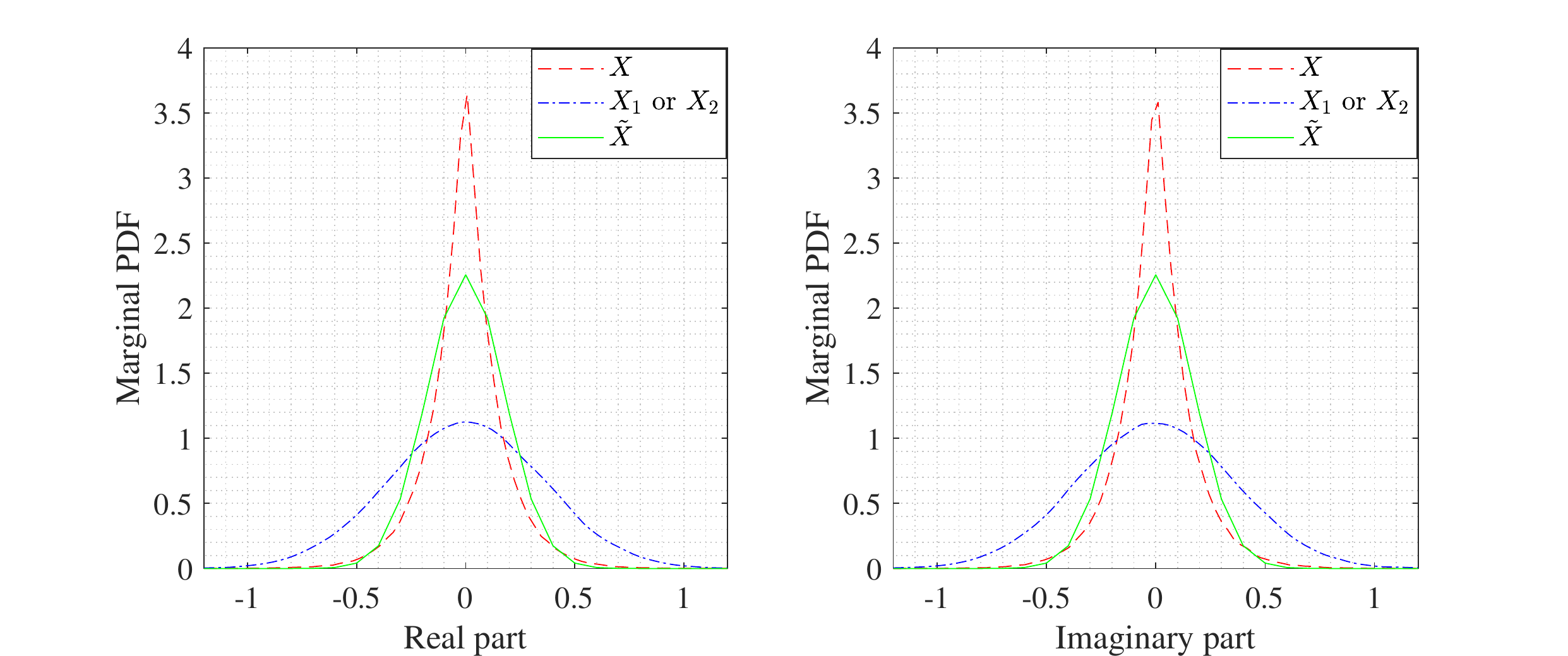}
\caption{Marginal probability distribution  of the real and imaginary  parts of $X,\{X_1,X_2\}$ and $\tilde{X}$. }
\label{MaPDF}
\normalsize
\end{figure}
Then, we focus on  the analysis of $[\mathbf{u}_{g,3}(t)]_n$.
For clarity, we refer to the distribution of this type of random variables as the product complex Gaussian distribution in the following.
Note that the product of two real Gaussian distributions, named as product real Gaussian distribution, has been  studied before in \cite{Springer1970,stojanac2018products}, where the authors proved that it can be  expressed in terms of  Meijer G-functions.
Consider a complex random variable $X$ that satisfies  $X = X_1X_2$,  where $X_1,X_2\sim \mathcal{CN}(0,\frac{1}{4})$.
Since it is difficult to directly analyse  the distribution of $X$, we first  investigate it via  Monte-Carlo simulation using $10^5$ samples.
{Fig \ref{MaPDF} demonstrates  the  marginal probability distributions of the  real and imaginary parts of $X$ and $X_1/X_2$.}
It is observed that the joint (or marginal) distribution resembles a shaper and slimmer complex Gaussian (or Gaussian) distribution, which inspires us to approximate
the product complex Gaussian distribution   by a simple complex Gaussian distribution with the same mean and variance.
However, since $[\mathbf{u}_{g,3}(t)]_n$ is in essence a component of the system noise variable $[\mathbf{u}_g(t)]_n$, we cannot sample this random variable  independently and calculate its mean and variance.
Hence, the ideas employed in the  PF and UKF methods, i.e., sampling a set of particles    to represent the posterior distribution of the state variable, cannot be  applied here to approximate the distribution of $[\mathbf{u}_{g,3}(t)]_n$.
{  To overcome this challenge, we resort to the following result about  the mean and variance of the random variable $X$, i.e., $\mu_P$ and $\sigma^2_P$,
\vspace{-0.3cm}\begin{equation}\label{theo1}
\begin{aligned}
&\mu_P = \mathbb{E}\big[X_1\big]\mathbb{E}\big[ X_2 \big] = 0,\\
&\sigma^2_P = \mathbb{E}\big[XX^*\big]=\mathbb{E}\big[X_1X_2X_2^*X_1^*\big]=\mathbb{E}\big[X_1X_1^*X_2X_2^*\big]=\mathbb{E}\big[X_1X_1^*\big]
\mathbb{E}\big[X_2X_2^*\big] = \sigma^2_1 \sigma^2_2.
\end{aligned}
\end{equation}

\vspace{-0.3cm}\noindent
As such, the mean and variance of $[\mathbf{u}_{g,3}(t)]_n$  can be directly  obtained  from  those of $\sqrt{\alpha_{\text{IA}}} [\mathbf{u}_{\textrm{IA}}(t)]_n$ and $\sqrt{\alpha_{\text{UI}}} [\mathbf{u}_{\textrm{UI}}(t)]_n$.
Then, based on  \eqref{theo1} and the numerical results shown in Fig. \ref{MaPDF}, a random product complex Gaussian variable $X$ with  mean $\mu_P$ and variance $\sigma^2_P$ can be approximated by a complex Gaussian variable $\tilde{X} \sim \mathcal{CN}(\mu_P, \sigma^2_P)$.}
As can be seen, although the distribution of  $\tilde{X}$ is similar to that of $X$, it is still not sharp enough as compared to  the true distribution.
 However, since only  one component in $[\mathbf{u}_{g}(t)]_n$, i.e., $[\mathbf{u}_{g,3}(t)]_n$, requires to be  approximated and the scaling coefficient of $[\mathbf{u}_{g,3}(t)]_n$, i.e., $\sqrt{\alpha_{\text{IA}}\alpha_{\text{UI}}}$, is  smaller than those of $[\mathbf{u}_{g,1}(t)]_n$ and  $[\mathbf{u}_{g,2}(t)]_n$, i.e.,   $\sqrt{\alpha_{\text{IA}}(1-\alpha_{\text{UI}})}$ and $\sqrt{(1-\alpha_{\text{IA}})\alpha_{\text{UI}}}$, we can infer that the negative effect  caused by the approximation error should be  limited. In Section \ref{Sec5}, we will show that the proposed GKF-based channel tracking algorithm
 can achieve good performance although under the employed approximation.
\vspace{-0.4cm}\subsection{Complex Gaussian Approximation}\label{Sec42}
  Next, we focus  on approximating  the distribution of $[\mathbf{u}_g(t)]_n$ based on the result in Section \ref{Sec41}.
According to the analysis in the previous subsection,  $[\mathbf{u}_{g,3}(t)]_n$ can be approximated by $[\hat{\mathbf{u}}_{g,3}(t)]_n$ which satisfies
$[\hat{\mathbf{u}}_{g,3}(t)]_n \sim \mathcal{CN}(0, \alpha_{\text{IA}}\alpha_{\text{UI}}\sigma^2_{\text{IA}}\sigma^2_{\text{UI}})$,
  thus $[\mathbf{u}_g(t)]_n$ can be approximated by  $[
\hat{\mathbf{u}}_g(t)]_n \triangleq [\mathbf{u}_{g,1}(t)]_n + [\mathbf{u}_{g,2}(t)]_n + [\hat{\mathbf{u}}_{g,3}(t)]_n, n\in [2,N+1]$.
It is noteworthy that the temporal correlation coefficients, i.e., $\alpha_{\text{IA}}$, $\alpha_{\text{UI}}$, $\alpha_{\text{UA}}$, and the variances of $[\mathbf{u}_{\text{IA}}]_n$, $[\mathbf{u}_{\text{UI}}]_n$ and $[\mathbf{u}_{\text{UA}}]_n$, i.e., $\sigma^2_{\text{IA}}$, $\sigma^2_{\text{UA}}$,  $\sigma^2_{\text{UI}}$, are   prior knowledges that are known. Since the sum of Gaussian distributions is still Gaussian,  the covariance matrix of $\mathbf{u}_g(t)$, i.e., $\mathbf{C}_g(t)$,  can be approximated by
\vspace{-0.3cm}\begin{equation}\label{Cg}
\hat{\mathbf{C}}_{g,\text{I}}(t) = \text{diag}\big(\alpha_{\text{UA}}\sigma^2_{\text{UA}}, \sigma^2_{g,1}(t),\cdots, \sigma^2_{g,N}(t) \big),
\end{equation}

\vspace{-0.2cm}\noindent
where
\vspace{-0.2cm}\begin{equation}
\sigma^2_{g,n}(t)  = \alpha_{\text{IA}}(1-\alpha_{\text{UI}})|[\mathbf{h}_r(t-1)]_n|^2\sigma^2_{\text{IA}} + \alpha_{\text{UI}}(1-\alpha_{\text{IA}})|[\mathbf{g}(t-1)]_n|^2\sigma^2_{\text{UI}}+ \alpha_{\text{IA}}\alpha_{\text{UI}}\sigma^2_{\text{IA}}\sigma^2_{\text{UI}}.
\end{equation}

\vspace{-0.2cm}\noindent
This is referred to as \emph{type one complex Gaussian approximation (CGA-I)} in the following.
It can be seen that the amplitudes of $[\mathbf{h}_r(t-1)]_n$ and $[\mathbf{g}(t-1)]_n$ are required in order  to calculate $\sigma^2_{g,n}(t)$, which is however difficult to obtain in the tracking process due to the passive nature of IRS, i.e., the CSI of the IRS-AP and user-IRS channels are difficult to obtain.   To address this issue, we propose to approximate $\alpha_{\text{IA}}(1-\alpha_{\text{UI}})|[\mathbf{h}_r(t-1)]_n|^2\sigma^2_{\text{IA}} +\alpha_{\text{UI}}(1-\alpha_{\text{IA}})|[\mathbf{g}(t-1)]_n|^2\sigma^2_{\text{UI}}$ using its lower bound, which is based on  the following theorem.
\vspace{-0.3cm}
\begin{theorem}\label{T2}
Assuming that $a = a_r + ja_i$ and $b = b_r + jb_i$ are two arbitrary complex scalars, then we have
\vspace{-0.2cm}\begin{equation}
|a|^2 + |b|^2 \ge \big|\Re(ab)\big| +  \big|\Im(ab)\big|,
\end{equation}
\end{theorem}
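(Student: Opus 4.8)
The plan is to reduce the inequality to two applications of the arithmetic--geometric mean inequality after writing $\Re(ab)$ and $\Im(ab)$ explicitly in terms of the real and imaginary parts of $a$ and $b$. First I would expand the product: $ab = (a_rb_r - a_ib_i) + j(a_rb_i + a_ib_r)$, so that $\Re(ab) = a_rb_r - a_ib_i$ and $\Im(ab) = a_rb_i + a_ib_r$, while the left-hand side is simply $|a|^2 + |b|^2 = a_r^2 + a_i^2 + b_r^2 + b_i^2$.

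Next I would bound each term on the right-hand side separately. Using the triangle inequality and then $2|xy| \le x^2 + y^2$ (which is just $(|x|-|y|)^2 \ge 0$), one obtains $|\Re(ab)| \le |a_rb_r| + |a_ib_i| \le \frac{1}{2}(a_r^2 + b_r^2) + \frac{1}{2}(a_i^2 + b_i^2) = \frac{1}{2}(|a|^2 + |b|^2)$; the identical argument applied with the pairings $(a_r,b_i)$ and $(a_i,b_r)$ gives $|\Im(ab)| \le \frac{1}{2}(a_r^2 + b_i^2) + \frac{1}{2}(a_i^2 + b_r^2) = \frac{1}{2}(|a|^2 + |b|^2)$. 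Adding the two bounds yields exactly $|\Re(ab)| + |\Im(ab)| \le |a|^2 + |b|^2$, which is the claim. An equivalent route, if a cleaner writeup is preferred, is to set $u = |\Re(ab)|$, $v = |\Im(ab)|$ and use $u^2 + v^2 = |ab|^2 = |a|^2|b|^2$ together with $(u+v)^2 \le 2(u^2+v^2)$, so that $u+v \le \sqrt{2}\,|a|\,|b| \le 2|a|\,|b| \le |a|^2 + |b|^2$.

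I do not expect any substantive obstacle here: the statement is elementary and both routes are a few lines. The only point worth a remark, since the theorem is used to lower-bound $\alpha_{\text{IA}}(1-\alpha_{\text{UI}})|[\mathbf{h}_r(t-1)]_n|^2\sigma^2_{\text{IA}} + \alpha_{\text{UI}}(1-\alpha_{\text{IA}})|[\mathbf{g}(t-1)]_n|^2\sigma^2_{\text{UI}}$, is that one should keep track of the equality conditions to confirm the bound is not vacuous: tracing the two AM--GM steps shows equality requires $|a_r| = |b_r|$, $|a_i| = |b_i|$ together with the sign alignments that prevent cancellation inside $\Re(ab)$ and $\Im(ab)$, and the bound is indeed attained, e.g.\ when $a = b$ is real or purely imaginary.
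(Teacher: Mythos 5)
Your main argument is correct and is essentially the paper's proof: both expand $ab=(a_rb_r-a_ib_i)+j(a_rb_i+a_ib_r)$ and bound the cross terms via the AM--GM inequality $2|xy|\le x^2+y^2$ applied to the pairings $(a_r,b_r),(a_i,b_i)$ and $(a_r,b_i),(a_i,b_r)$; the paper merely phrases it as a two-sided bound $x^2+y^2\ge\pm 2xy$ on each absolute value instead of your triangle-inequality-then-AM--GM ordering, and your alternative route via $|\Re(ab)|+|\Im(ab)|\le\sqrt{2}\,|a||b|\le|a|^2+|b|^2$ is also valid. One side remark in your proposal is wrong, though it does not affect the theorem: the inequality is attained only at $a=b=0$, not when $a=b$ is real or purely imaginary (e.g.\ $a=b=x\in\mathbb{R}$ gives $2x^2$ versus $x^2$); indeed your own second route shows the right side is at most $\sqrt{2}|a||b|$ while the left side is at least $2|a||b|$, forcing $|a||b|=0$ at equality — but since the paper only needs a lower bound on the noise variance, non-tightness is harmless.
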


 \begin{proof}
Please relegate to Appendix.
\end{proof}

{Based on Theorem \ref{T2},  we employ  a random variable $
 [\tilde{\mathbf{u}}_g(t)]_n \sim \mathcal{CN}(0,\tilde{\sigma}^2_{g,n}(t))$ to approximate  $[{\mathbf{u}}_g(t)]_n$,
  where $\tilde{\sigma}^2_{g,n}(t)\le \text{Var}\big[ [{\mathbf{u}}_g(t)]_n \big]$}.\footnote{
    Note that in the KF method,  the  covariance matrix of the noise variable, i.e., $\mathbf{u}_g(t)$, is used to measure the estimation error of the state variable, i.e., $\mathbf{h}(t)$. Therefore,
  approximating $\text{Var}\big[ [{\mathbf{u}}_g(t)]_n \big]$ using the lower bound $\tilde{\sigma}^2_{g,n}(t)$ will make the estimated variable more stable. }
  Specifically, the value of $\tilde{\sigma}^2_{g,n}(t)$ is given by
 \vspace{-0.3cm}\begin{equation}
\tilde{\sigma}^2_{g,n}(t) = \delta_1
\Big(\big|\Re([\mathbf{h}(t-1)]_n)\big|   + \big|\Im([\mathbf{h}(t-1)]_n)\big|\Big) + \delta_2,
\end{equation}

\vspace{-0.2cm}\noindent
where
$\delta_1 \triangleq \sqrt{\alpha_{\text{IA}}(1-\alpha_{\text{IA}})\alpha_{\text{UI}}(1-\alpha_{\text{UI}})}
\sigma_{\text{IA}}\sigma_{\text{UI}}$ and
$\delta_2 \triangleq \alpha_{\text{IA}}\alpha_{\text{UI}}\sigma^2_{\text{IA}}\sigma^2_{\text{UI}}$.
Then, the covariance matrix $\mathbf{C}_{g}(t)$ can be approximated by
\vspace{-0.3cm}\begin{equation}\label{Cg1}
\hat{\mathbf{C}}_{g,\text{II}}(t) =
\left[\begin{array}{cc}
\alpha_{\text{UA}}\sigma^2_{\text{UA}} & \mathbf{0}_{1\times N}\\
\mathbf{0}_{N\times 1}   &
\delta_1\text{diag}\big(\big|\Re([\mathbf{h}(t-1)]_{2:N+1})\big|   + \big|\Im([\mathbf{h}(t-1)]_{2:N+1})\big|\big) +  \delta_2\mathbf{I}_N
\end{array}
\right],
\end{equation}

\vspace{-0.2cm}\noindent
where $[\mathbf{h}(t-1)]_{2:N+1} \triangleq \big[[\mathbf{h}_{2,N}(t-1)]_2,\cdots, [\mathbf{h}(t-1)]_{N+1} \big]^T \in \mathbb{C}^{N\times 1}$. This is named as \emph{type two complex Gaussian approximation (CGA-II)} in the following.
Note that CGA-I and CGA-II are different in the sense that the IRS-AP and user-IRS channel coefficients are required in CGA-I, while they are not required in CGA-II. However, the performance achieved by CGA-I and CGA-II is quite close, as will be shown in the simulation results.
 \vspace{-0.4cm}\subsection{GKF-based Channel Tracking}
 {
 Finally, based on the analysis presented in Section \ref{Sec41} and \ref{Sec42}, we extend the KF-based channel tracking algorithm (i.e., Algorithm \ref{Kalman}) to the general case and propose in Algorithm \ref{NonKalman} the GKF-based channel tracking algorithm.  Different from Algorithm \ref{Kalman} where only the Kalman gain $\mathbf{G}_{\text{KF}}(t)$, the correction $\mathbf{h}_{\text{KF}}(t)$ and the estimation covariance $\mathbf{M}_{\text{KF}}(t)$ are updated in each time interval, the covariance matrix $ \hat{\mathbf{C}}_{gl}(t)$ is also required to be updated in Algorithm \ref{NonKalman} since it varies with time in the general case.
  According to \eqref{Cg},  the estimation of $\hat{\mathbf{C}}_{gl}(t)$  can be obtained by
 \vspace{-0.3cm}\begin{equation}\label{Cge}
 \hat{\mathbf{C}}_{gl}(t) =\left[\begin{array}{cc}
\alpha_{\text{UA}}\sigma^2_{\text{UA}} & \mathbf{0}_{1\times N}\\
\mathbf{0}_{N\times 1}   &
\delta_1\text{diag}\big(\big|\Re([\mathbf{h}_{\text{KF}}(t-1)]_{2:N+1})\big|   + \big|\Im([\mathbf{h}_{\text{KF}}(t-1)]_{2:N+1})\big|\big) +  \delta_2\mathbf{I}_N
\end{array}
\right].
 \end{equation}
 \vspace{-0.5cm}

In the second stage of the proposed transmission protocol,    imaginary observations can be similarly obtained by the OB-LSTM network as in Section \ref{Sec3}, which serve as the input of Algorithm \ref{NonKalman}, then the channels in this stage can be estimated and no channel training overhead is required.}


 \begin{algorithm}[h]
\setlength{\belowcaptionskip}{-2cm}
\setlength{\abovecaptionskip}{-2cm}
\caption{GKF-based Channel Tracking Algorithm} 
\label{NonKalman}
\hspace*{0.02in} {\bf Input:}  covariance matrix  $\mathbf{C}_h = \mathbb{E}\big[\mathbf{h}\mathbf{h}^H\big]$ and observations {  $\tilde{\mathbf{y}}(t)$} \\
\hspace*{0.02in} {\bf Output:} {  Estimated channel vector $\mathbf{h}_{\text{KF}}(t)$}
\begin{algorithmic}[1]
\State \textbf{Initialization:} $t = 0$, $\mathbf{h}_{\text{KF}}(-1) = \mathbf{0}$, $ \mathbf{M}_{\text{MF}} = \mathbb{E}\big[\mathbf{h}\mathbf{h}^H\big]$, $\hat{\mathbf{C}}_{gl}(0) = \mathbb{E}\big[\mathbf{h}\mathbf{h}^H\big]$
\While {{  $t\ge 1$}}

\emph{Prediction step}:

\State Calculate the prediction according to \eqref{KF1},

 \State Calculate the prediction covariance matrix according to

$\mathbf{M}(t) = \mathbb{E}\big[\big(\mathbf{h}(t)-\hat{\mathbf{h}}(t) \big)\big( \mathbf{h}(t)-\hat{\mathbf{h}}(t)\big)^H  \big]
= \mathbf{A}_h\mathbf{M}_{\text{KF}}(t-1)\mathbf{A}_h^T + \mathbf{B}_h\hat{\mathbf{C}}_{gl}(t)\mathbf{B}_h^T$,

 \emph{Update step}:

\State Update the Kalman gain vector according to \eqref{KF3},

\State Update the correction according to \eqref{KF4},

\State Update the estimation covariance matrix according to \eqref{KF5},

\State Update the approximated covariance matrix of $\mathbf{u}_g$, $\hat{\mathbf{C}}_{gl}(t+1)$, according to \eqref{Cge},
\State $t = t+1$,
\EndWhile
\end{algorithmic}
\end{algorithm}

%

\vspace{-0.8cm}
\section{Simulation Results}\label{Sec5}

{In this section, we present numerical results to validate the effectiveness of the proposed two-stage transmission protocol and the 2SCTP scheme. We consider an IRS-aided wireless communication system as shown in Fig. \ref{system}, where
  a three-dimensional coordinate system  is assumed and  the AP  and   IRS are located
on the $x$-axis and $y-z$ plane, respectively. The reference antenna/element at the
AP/IRS  are located at $(3$ $\text{m}, 0, 0)$ and $(0, 50$ $\text{m}, 2$ $\text{m})$,  and the user is located at  $(2$ $\text{m}, 50$ $\text{m}, 0)$.
In the simulations, the IRS is equipped with a $5\times 7$ uniform rectangular
array.\footnote{If the number of IRS reflecting elements is large, the channel training overhead required to obtain the observations and the number of network parameters included in the OB-LSTM network will also be increased. In this case, we can employ the grouping and partition method \cite{9133142}  to achieve a good trade-off between channel tracking performance and channel training overhead/network complexity.} The reference distance and the pass loss at the reference distance are set as $d_0 = 1$ m and $l_0=-30$ dB. The IRS-AP, user-IRS and user-AP path-loss exponents are fixed to $ \gamma^{\text{IA}} = 2.2$, $\gamma^{\text{UI}} = 2.2$ and $\gamma^{\text{UA}}=3.6$, respectively. The transmit power and noise variance are set to $p = 26$ dBm and $\sigma^2 = -80$ dBm.
{  Furthermore, the variances of the elements of $\mathbf{u}_{\text{IA}}$, $\mathbf{u}_{\text{UI}}$ and  ${u}_{\text{UA}}$ are respectively set to $\sigma^2_{\text{IA}} = l_{\text{IA}}$, $\sigma^2_{\text{UA}}  =  l_{\text{UA}}$ and $\sigma^2_{\text{UI}}   = l_{\text{UI}}$.}
In all our simulations, the definitions of the normalized mean square error (NMSE) and average NMSE (ANMSE) in the  $t$-th time interval  are given by
$
\text{NMSE}(t) = \frac{\|\mathbf{h}_{\text{KF}}(t) - {\mathbf{h}}(t)  \|^2}{\|\mathbf{h}(t) \|^2}, \;\;\;\;
\text{ANMSE}(t) = \frac{1}{t}\sum_{i=1}^t\frac{\|\mathbf{h}_{\text{KF}}(i) - {\mathbf{h}}(i)  \|^2}{\|\mathbf{h}(i) \|^2}.
$


\begin{figure}[h]
\vspace{-0.3cm}
\setlength{\belowcaptionskip}{-0.8cm}
\renewcommand{\captionfont}{\small}
\centering
\includegraphics[scale=.30]{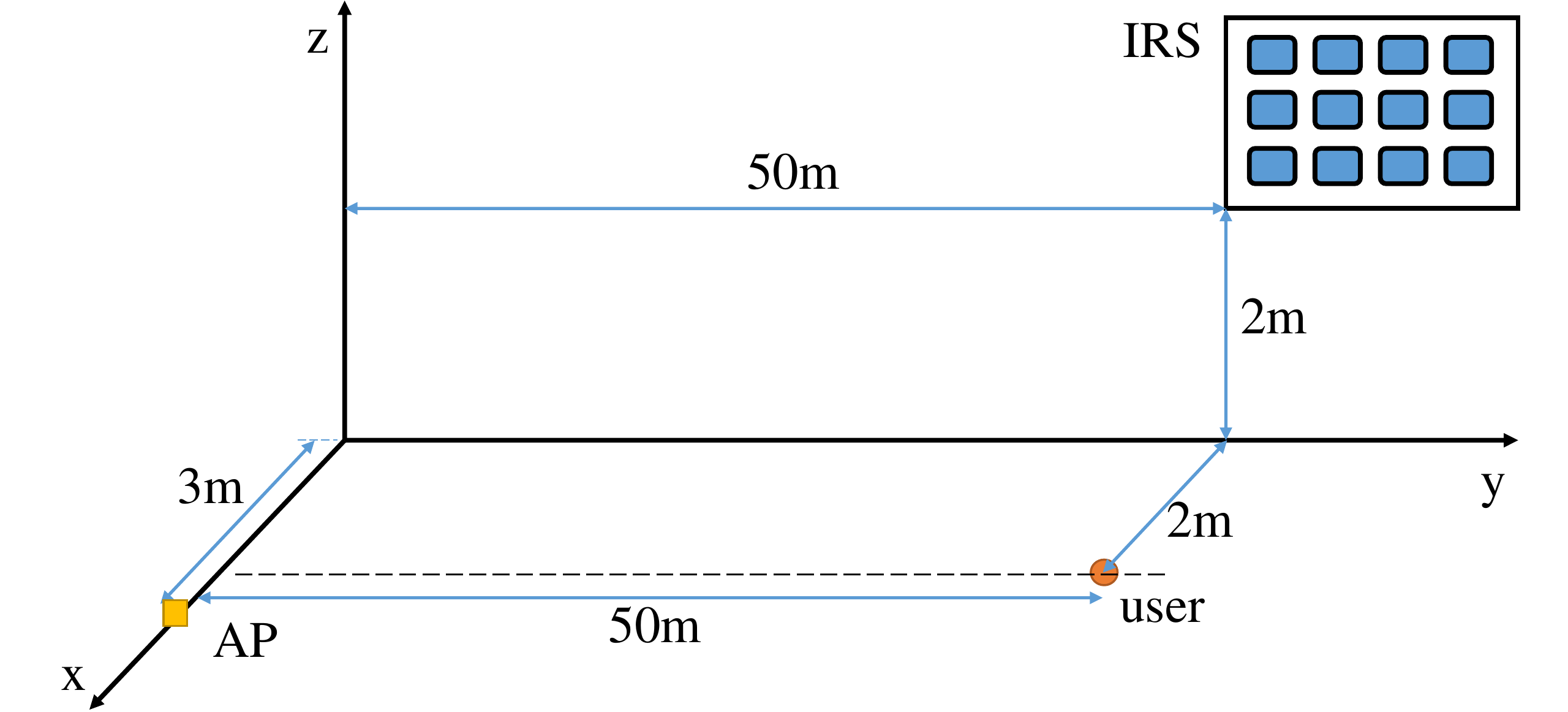}
\caption{Simulation setup.}
\label{system}
\normalsize
\end{figure}
\vspace{-0.4cm}
\subsection{Special Case}
In this subsection, we consider the special case where   the IRS-AP channel is assumed to stay unchanged, i.e., $\alpha^{\text{IA}}=0$, and the temporal correlation coefficients of the
user-IRS and user-AP channels are set as  $\alpha^{\text{UI}} = 0.01$ and $\alpha^{\text{UA}} = 0.01$, respectively.

First, we investigate the channel tracking performance in the first stage.
Fig. \ref{NMSE-tau} shows
 the NMSE performance of the KF-based channel tracking algorithm (i.e., Algorithm \ref{Kalman}) with different numbers of time intervals. {  The   number of time slots in each time interval for transmitting pilot symbols is assumed to be $\tau_1 \in \{2,6,10\}$.}
For comparison, we also provide the performance achieved by the channel estimation (CE)  scheme proposed in \cite{Jensen2019} as the benchmark, i.e., in each time interval, $N+1$ time slots are allocated to transmit pilot symbols and the channels are estimated by using the DFT reflection pattern and the MMSE channel estimation algorithm.
As can be seen, the performance of  Algorithm \ref{Kalman} improves with the increasing of $\tau_1$ and in the meantime,  higher convergence speed can be achieved.
For example,  when $\tau_1 = 10$, the number of time intervals required by Algorithm \ref{Kalman} to achieve convergence is less than 10,    while when $\tau_1 = 2$, this number increases to 20.
Moreover, with larger $\tau_1$, Algorithm \ref{Kalman} shows  more stable NMSE performance after   convergence, e.g., the fluctuation of the NMSE curve  when $\tau_1 = 10$ is smaller  than that when $\tau_1 = 2$.
Furthermore,  as compared to the benchmark, Algorithm \ref{Kalman} can achieve much lower (150 time slots versus 525 time slots if 15 time intervals and $\tau_1 = 10$ are considered) channel training overhead with only minor NMSE performance loss.
%

\begin{figure}[b]
\vspace{-1cm}
\setlength{\belowcaptionskip}{-1cm}
\renewcommand{\captionfont}{\small}
\begin{minipage}[t]{0.5\textwidth}
\centering
\includegraphics[scale=.45]{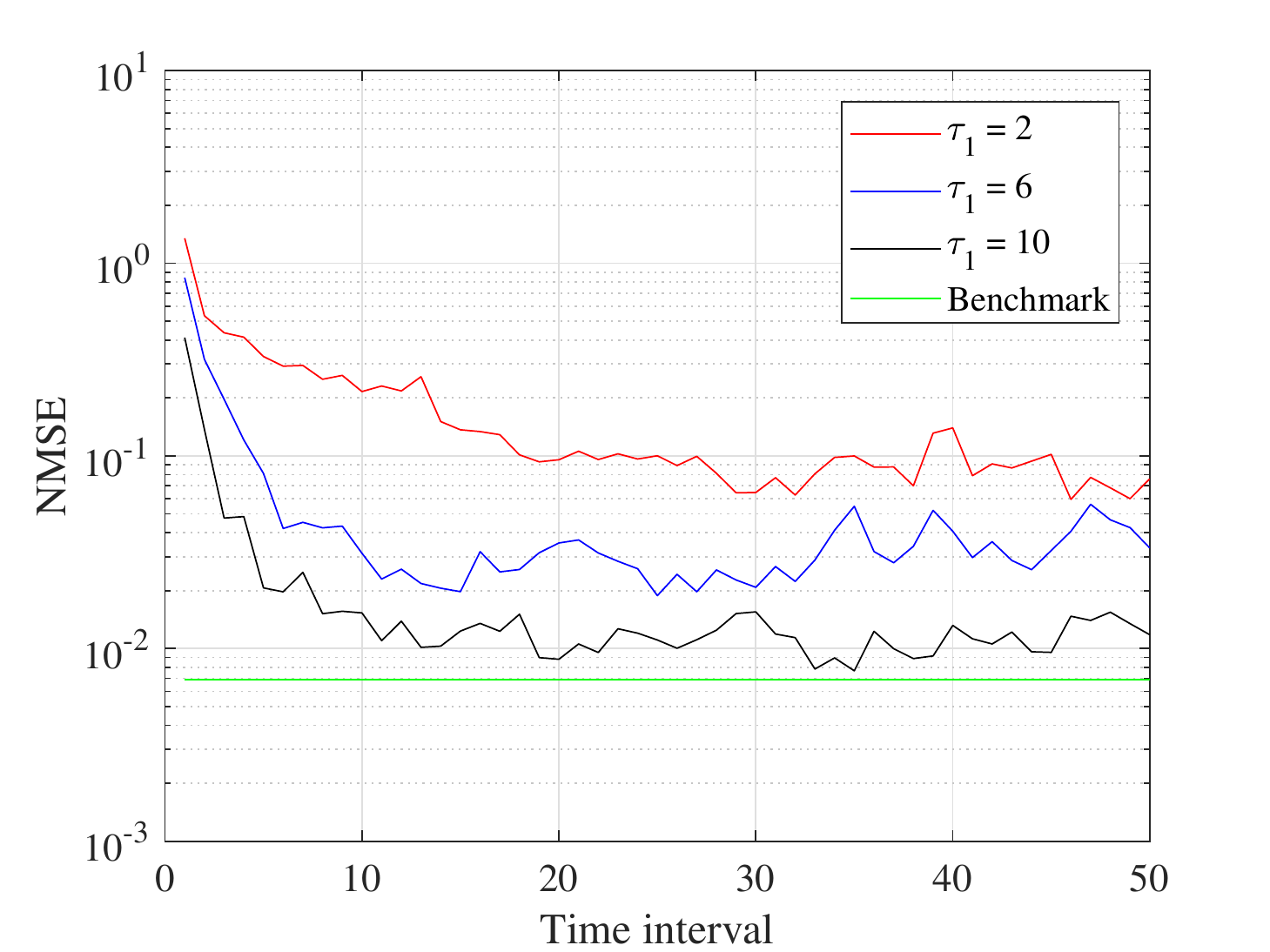}
\caption{Channel tracking performance in terms of  NMSE when different $\tau_1$ are assumed in the first stage (special case). }
\label{NMSE-tau}
\normalsize
\end{minipage}
\;
\begin{minipage}[t]{0.5\textwidth}
\centering
\includegraphics[scale=.45]{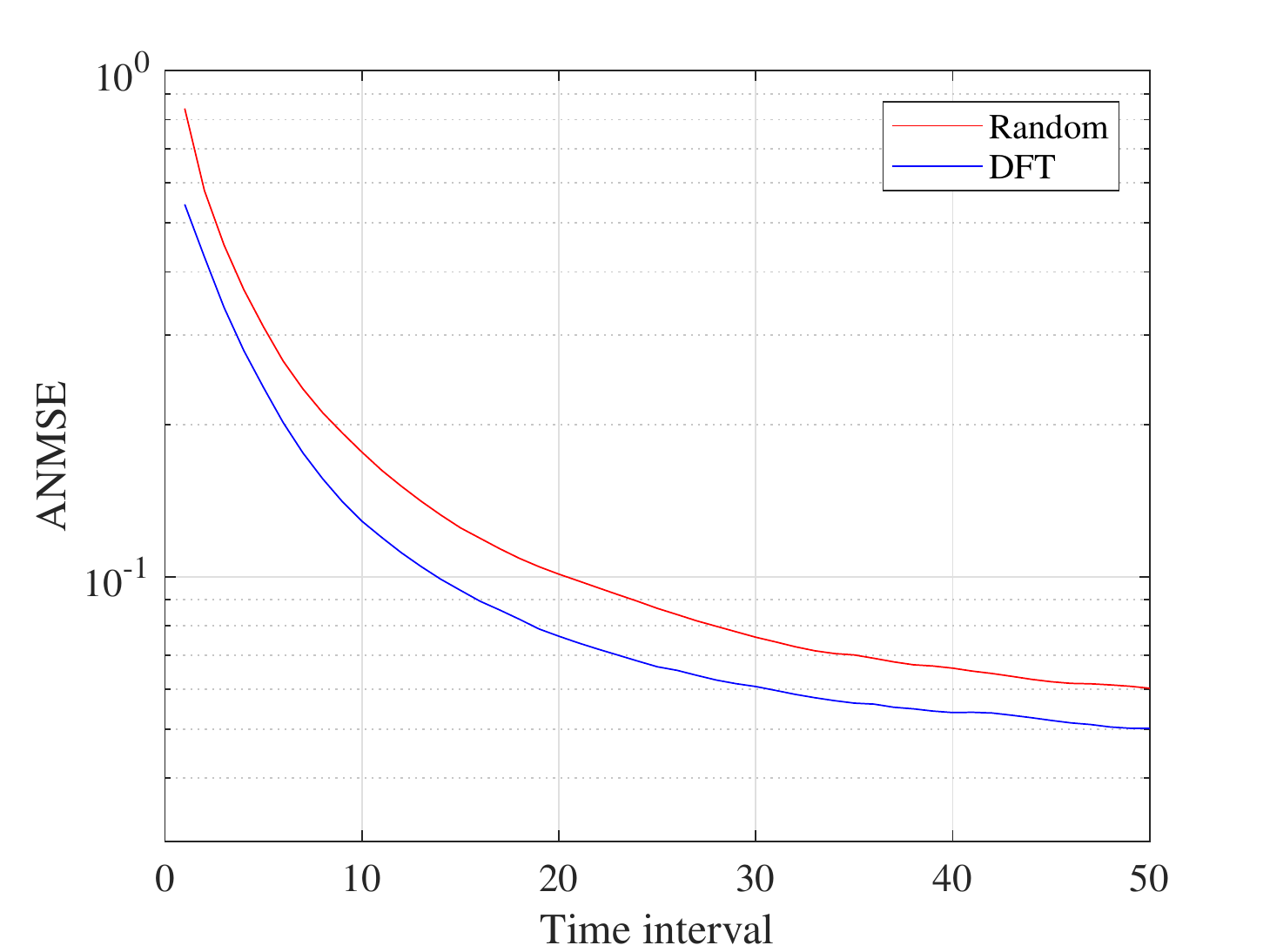}
\caption{Channel tracking performance in terms of ANMSE when two different {measurement matrices} are considered (special case). }
\label{ANMSE-DFT}
\normalsize
\end{minipage}
\end{figure}


In Fig. \ref{ANMSE-DFT}, we investigate    the ANMSE performance of Algorithm \ref{Kalman} in the first stage  when two different   {  measurement matrices $\mathbf{V}(t)$} are employed, i.e.,  {  the random measurement matrix and the DFT measurement matrix}.
{  In the random measurement matrix, each element of the reference matrix $\mathbf{Q}$ in \eqref{Q} is generated from a complex Gaussian distribution $\mathcal{CN}(0,1)$, while in the DFT measurement matrix,  $\mathbf{Q}$ is an $(N+1)$-DFT matrix. $\tau_1$ is fixed to 6.}
One can see that as compared to using   random {measurement matrix}, Algorithm \ref{Kalman} with the DFT {  measurement matrix}  exhibits better convergence. This is due to the fact that all the columns of the DFT {  measurement matrix} are independent  with each other, which forms a better sensing matrix.



Second, we investigate the channel prediction performance in the second stage.   In our simulations, the number of LSTM layers in the OB-LSTM network is set to  $K = 4$. The DFT {  measurement matrix} is used  due to its superiority demonstrated in Fig. \ref{ANMSE-DFT}. The number of training  samples is set to $10^4$, and the  batchsize and   learning rate are fixed to 10 and 0.0001, respectively. The training process is conducted
offline on a Windows server with Intel Xeon
Gold 6230 CPU and an Nvidia 2080Ti GPU, and the proposed networks are  implemented in Python using
the TensorFlow library with the Adam optimizer.

We present in Fig. \ref{convergence55}   the convergence of the proposed OB-LSTM network during   training, {  where the $\mathcal{L}_2$ loss in \eqref{L2}
  when feeding the validation data set into the OB-LSTM network is regarded as the   performance metric.}
We consider four different configurations of the network structure, i.e., the scaling factor of the LSTM units $\epsilon$ is set to be varying  in $[1,3,5,10]$. Besides, two different configurations of the lengths of the input  observations and predicted imaginary observations are considered: (a) $L_I = 6, L_P = 1$, and (b) $L_I = 6, L_P = 6$.
{  From Fig. \ref{convergence55}, we can observe that    the $\mathcal{L}_2$ loss achieved by the OB-LSTM network is able to converge in about 20000 iterations.}
Besides,   the performance of the OB-LSTM network improves  with the increasing of the scaling factor $\epsilon$. However, larger  $\epsilon$ also means higher computational complexity. Therefore, the hyper-parameter $\epsilon$ should be carefully chosen  before training to strike a good balance between performance and computational complexity.
Furthermore, by comparing Fig. \ref{convergence55} (a) and (b), we can see that with the same value of $\epsilon$, the OB-LSTM network in case (a) outperforms that in case (b).
 This implies that the prediction performance of the OB-LSTM network deteriorates as the length of the predicted imaginary observations $L_P$ increases, and there is a tradeoff between  channel training overhead (i.e., prediction length) and  prediction performance.

\begin{figure}[t]
\vspace{-0.5cm}
\setlength{\belowcaptionskip}{-0.5cm}
\renewcommand{\captionfont}{\small}
\centering
\includegraphics[scale=.45]{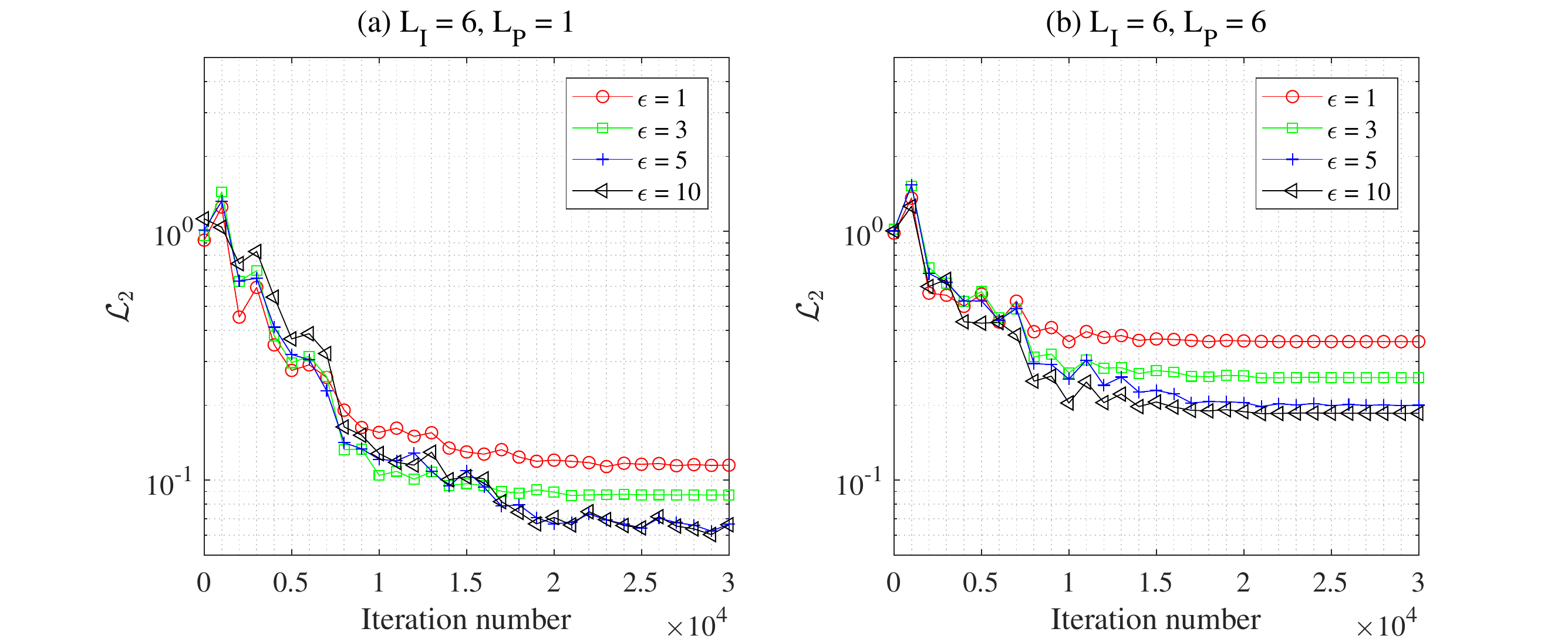}
\caption{Convergence of the proposed OB-LSTM network during training. }
\label{convergence55}
\normalsize
\end{figure}

Next, we exhibit in TABLE \ref{Tab1}   {  the NMSE of the predicted imaginary observations (OB-NMSE)} achieved by the proposed OB-LSTM network, when the number of time intervals allocated for the two stages  are fixed to $T_1 = 6$ and $T_2 = 6$, respectively. {We consider \emph{Strategy A} and \emph{Strategy B} discussed in Section \ref{OB} which meet the requirement of $T_1 = 6$ and $T_2 = 6$.}
 As shown in TABLE \ref{Tab1}, the {OB-NMSE} achieved by \emph{Strategy A} is quite stable over  all the considered time intervals,  while that by  \emph{Strategy B}  gradually deteriorates over time. Besides, \emph{Strategy B} outperforms \emph{Strategy A}
in the first three time intervals, while   \emph{Strategy A} shows better   performance in  the other time intervals.
{The former is due to the non-ideal processing introduced by the OB-LSTM network  when the output dimension $L_P$ is larger ($L_P=6$) in \emph{Strategy A}.
The latter is because  the  OB-LSTM network  in \emph{Strategy B} outputs the following imaginary observations one by one based on both real and imaginary observations,  as such the prediction  errors in the previous time intervals will deteriorate
the prediction performance in the subsequent time intervals  and thus \emph{Strategy B} becomes worse than \emph{Strategy A}
in the later time intervals. }

\begin{table*}[t]
\vspace{-0em}
\setlength{\abovecaptionskip}{-0.0cm}
\setlength{\belowcaptionskip}{-0.0cm}
\caption{Observation prediction performance in the second stage.}
\begin{center}
\begin{tabular}{|c|c|c|c|c|c|c|}
\hline
\multirow{2}{*} {  OB-NMSE}& \multicolumn{6}{c|}{Time interval index in the second stage} \\
\cline{2-7}
 & 1 & 2 & 3 & 4 & 5 & 6 \\
\hline
\emph{Strategy A} &  0.1058 & 0.1263 & 0.1208 & 0.1098 & 0.1026  & 0.1541\\
\hline
\emph{Strategy B} &  0.0297 & 0.0579 & 0.0748 & 0.1161 &  0.1235 & 0.1449 \\
\hline
\end{tabular}
\label{Tab1}
\end{center}
\vspace{-1.0cm}
\end{table*}

\begin{figure}[b]
\vspace{-1.0cm}
\setlength{\belowcaptionskip}{-0.0cm}
\renewcommand{\captionfont}{\small}
\centering
\includegraphics[scale=.48]{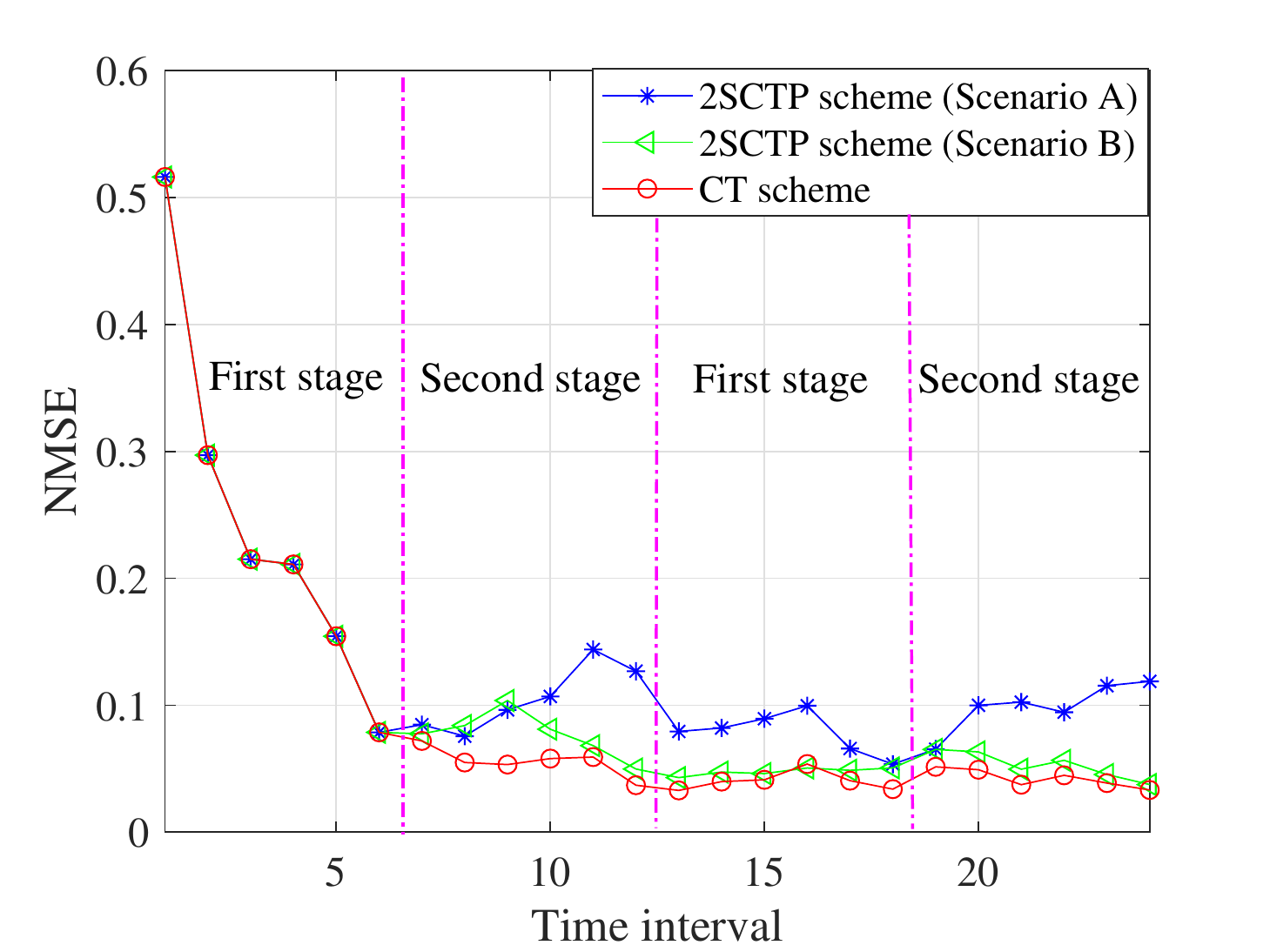}
\caption{NMSE performance of the proposed 2SCTP scheme in different scenarios. }
\label{SCTPper}
\normalsize
\end{figure}


Then, in Fig. \ref{SCTPper}, we investigate    the overall NMSE performance of the proposed 2SCTP scheme in two different scenarios, i.e., Scenario A and Scenario B.
 Specifically, in Scenario A, we set $T_1= 6$, $T_2 = 6$, $L_I = 6$, $L_P = 6$ and \emph{Strategy A} is employed to generate the predicted imaginary observations;
   while in Scenario B, we  set  $T_1= 6$, $T_2 = 3$, $L_I = 6$, $L_P = 1$ and \emph{Strategy B} is employed.
  For comparison, we consider a pure  channel tracking (CT) scheme, where
   there is no channel prediction, i.e., all the time intervals are allocated to the first stage.
Note that  the ratio  $\frac{T_2}{T_1}$  of Scenario A is larger than that of Scenario B, thus  lower channel training overhead can be achieved  by Scenario A.
 As shown in Fig. \ref{SCTPper},  in the first stage, the NMSE performance of the 2SCTP scheme in Scenario B is the same to that in Scenario A, however in the second stage,
 the performance  in Scenario B
 is better than that in Scenario A and it is quite close to
 that of the benchmark.
Therefore, for the proposed  2SCTP scheme, adopting the parameters in Scenario B can effectively reduce the channel training overhead with almost no  performance loss.

At last, {Table \ref{Tab2} compares the channel training overhead required by the 2SCTP scheme in  Scenarios A and   B when the number of  time intervals  is fixed to $T=3600$.   The   amount of channel training overhead is measured by  the number of time slots allocated for pilot transmission.
The overheads required by   the CE   and   CT schemes    are regarded as   benchmarks.
It can be observed from Table \ref{Tab2} that in Scenario B, the 2SCTP scheme  requires less channel training overhead than that in Scenario A, however the NMSE performance is worse in Scenario B (see Fig. \ref{SCTPper}).
  Moreover, the proposed 2SCTP scheme is able to  reduce the channel training overhead significantly as compared to the
  CE and CT schemes. In particular, the overhead required in Scenario B is only 8.3\% and 50.0\% of those by the CE and CT schemes, respectively.

  .
}
\begin{table*}[t]
\vspace{-0.0em}
\setlength{\abovecaptionskip}{-0.0cm}
\setlength{\belowcaptionskip}{-0.0cm}
\caption{Channel Training Overhead Comparison.}
\begin{center}
\begin{tabular}{|c|c|c|c|}
\hline
Scheme & 2SCTP scheme (Scenario A) & 2SCTP scheme (Scenario B) \\
\hline
Channel training overhead & $1.44\times 10^4$  & $1.08\times 10^4$ \\ 
\hline
Scheme & CT scheme & CE scheme \\
\hline
Channel training overhead & $2.16\times 10^4$  & $1.30\times 10^5$   \\
\hline
\end{tabular}
\label{Tab2}
\end{center}
\vspace{-1.2cm}
\end{table*}
\vspace{-0.6cm}
\subsection{General Case}
\begin{figure}[b]
\vspace{-1.2cm}
\setlength{\belowcaptionskip}{-1cm}
\renewcommand{\captionfont}{\small}
\begin{minipage}[t]{0.5\textwidth}
\centering
\includegraphics[scale=.45]{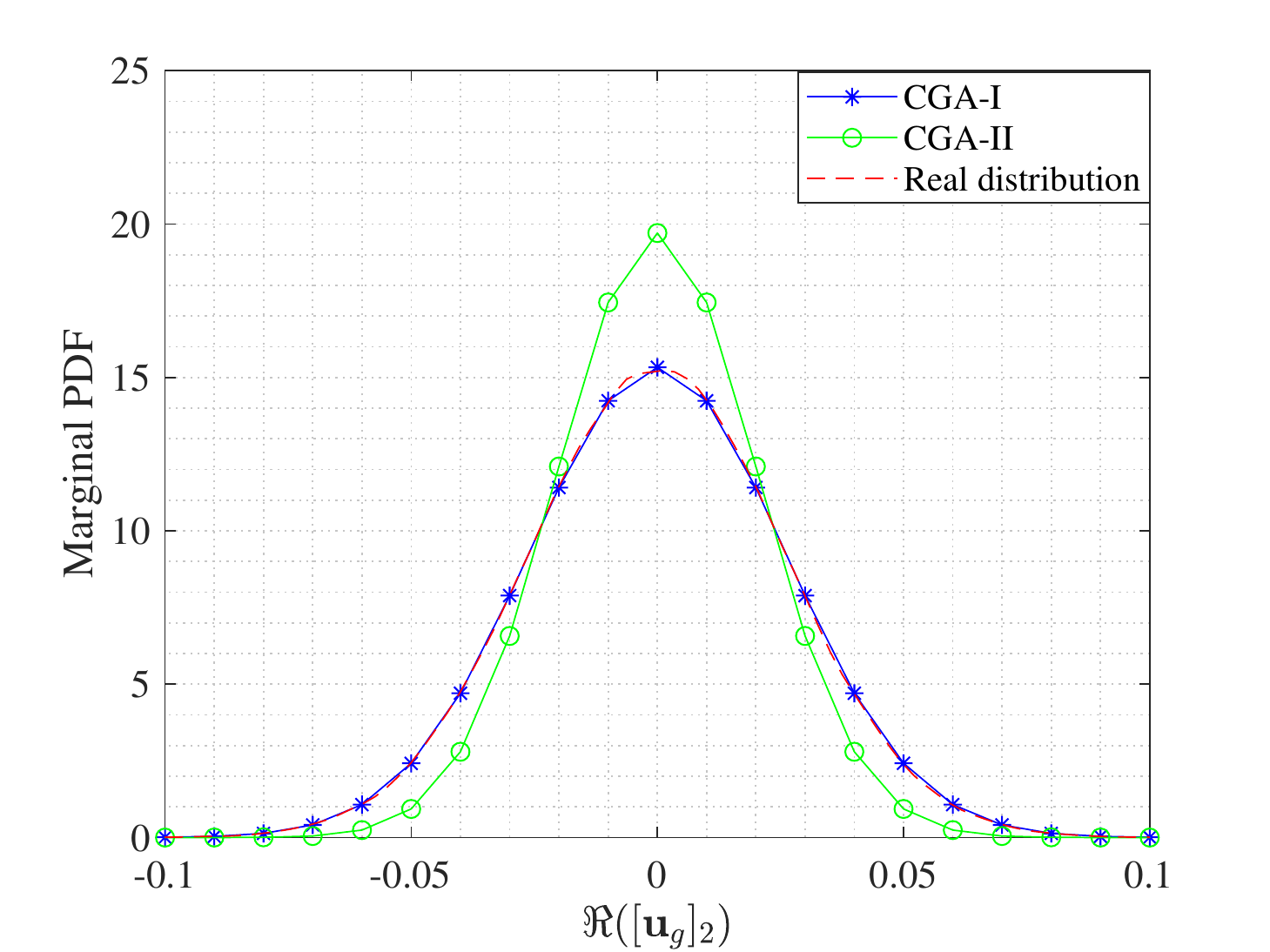}
\caption{Marginal distributions of the random variable $\Re([\mathbf{u}_g]_2)$. }
\label{GAU}
\normalsize
\end{minipage}
\;
\begin{minipage}[t]{0.5\textwidth}
\centering
\includegraphics[scale=.45]{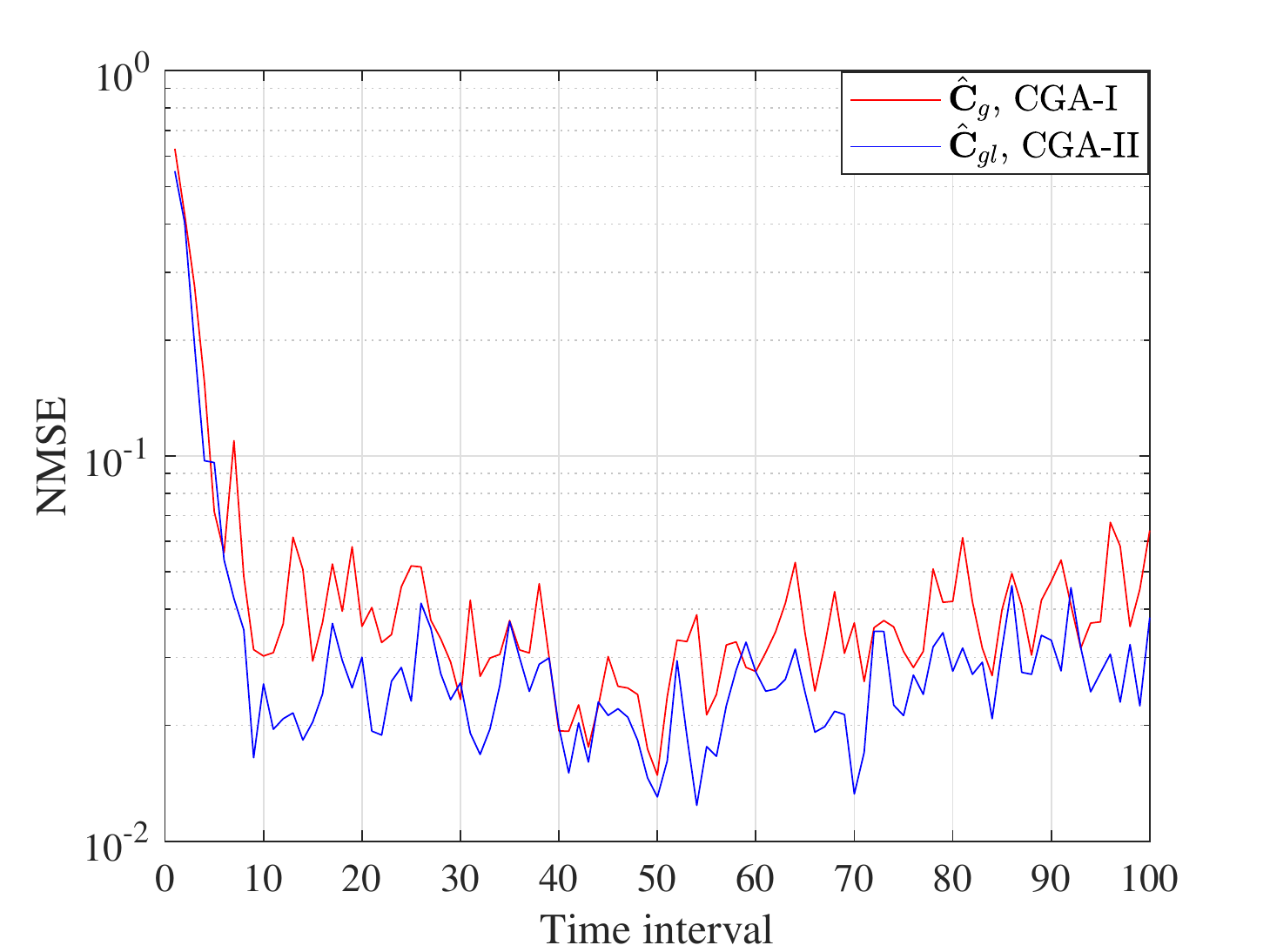}
\caption{Channel tracking process in terms of NMSE
where different approximated covariance matrices of $\mathbf{u}_g$ are assumed (general case).}
\label{NMSE-gen-a}
\normalsize
\end{minipage}
\end{figure}

In this subsection, we consider the general case and  set the   temporal correlation coefficients as $\alpha^{\text{IA}} = \alpha^{\text{UI}} = \alpha^{\text{UA}} = 0.01$.

First, we show in  Fig. \ref{GAU}   the real distribution of $\mathbf{u}_g$ and its approximations using  CGA-I in \eqref{Cg} and CGA-II in \eqref{Cg1}. For simplicity, we take the marginal distribution of the real and imaginary  parts of   $[\mathbf{u}_g]_2$ as an example. Besides, since the marginal distributions of $\Re([\mathbf{u}_g]_2)$ and $\Im([\mathbf{u}_g]_2)$ are the same, we only plot the distribution of $\Re([\mathbf{u}_g]_2)$ and its approximation in Fig. \ref{GAU}. It is observed that the distribution achieved by  CGA-I
 is almost the same with the real distribution, while the distribution achieved by  CGA-II has a smaller variance than the real distribution.



Then, Fig. \ref{NMSE-gen-a} shows  the  performance achieved by the proposed
 GKF-based channel tracking algorithm (i.e., Algorithm \ref{NonKalman}), when  two different approximated covariance matrices of $\mathbf{u}_g$  are used, i.e., $\hat{\mathbf{C}}_{g,\text{I}}$ given in \eqref{Cg} and $\hat{\mathbf{C}}_{gl}$ given in \eqref{Cge}, which are obtained via CGA-I and CGA-II, respectively.
It is noteworthy that $\mathbf{h}_r(t-1)$ and $\mathbf{g}(t-1)$  are  needed to calculate $\hat{\mathbf{C}}_{g,\text{I}}(t)$, while only $\mathbf{h}_{\text{KF}}(t-1)$ is required to obtain  $\hat{\mathbf{C}}_{gl}(t)$.
To employ $\hat{\mathbf{C}}_{g,\text{I}}$   in Algorithm  \ref{NonKalman}, we replace \eqref{Cge} with \eqref{Cg} in step 8 to update  the approximated covariance matrix of $\mathbf{u}_g$, and $\mathbf{h}_r(t-1)$ and $\mathbf{g}(t-1)$ are assumed to be known.\footnote{Due to the passive nature  of IRS, the IRS-AP channel $\mathbf{g}$ and the user-IRS channel $\mathbf{h}_r$ are difficult to be estimated. Here, to show the performance  of the proposed CGA-I method, we assume that
 $\mathbf{g}$  and $\mathbf{h}_r$ are known.} The   number of time slots allocated for the channel training phase is set to $\tau_1=6$. As shown in Fig. \ref{NMSE-gen-a},  by using $\hat{\mathbf{C}}_{gl}$ and $\hat{\mathbf{C}}_{g,\text{I}}$, Algorithm \ref{NonKalman} can achieve  similar NMSE performance, and the converged NMSE can reach  0.05  after about 10 time intervals.


Next, in Fig. \ref{NMSE-tau-g}, we investigate   the NMSE performance achieved by Algorithm \ref{NonKalman} with different values of $\tau_1$, i.e., $\tau_1 \in [2, 4, 6, 8]$. Similar to Fig. \ref{NMSE-tau}, we regard the NMSE performance of the CE scheme as the   benchmark. It can be observed that when $\tau_1$ is small, i.e., $\tau_1 = 2$, Algorithm \ref{NonKalman} almost fails to track the varying channel.
When $\tau_1 = 4, 6, 8$, the NMSE performance
achieved by Algorithm \ref{NonKalman}  can achieve convergence within $10$ time intervals. Besides, as $\tau_1$ increases, the convergence will become  faster and more stable. However, there is a performance gap (though not large) between Algorithm \ref{NonKalman} and the benchmark, which is because the number of   pilots required by  Algorithm \ref{NonKalman} is much less than that of the CE scheme.
\begin{figure}[t]
\vspace{-0.5cm}
\setlength{\belowcaptionskip}{-0.5cm}
\renewcommand{\captionfont}{\small}
\begin{minipage}[t]{0.5\textwidth}
\centering
\includegraphics[scale=.45]{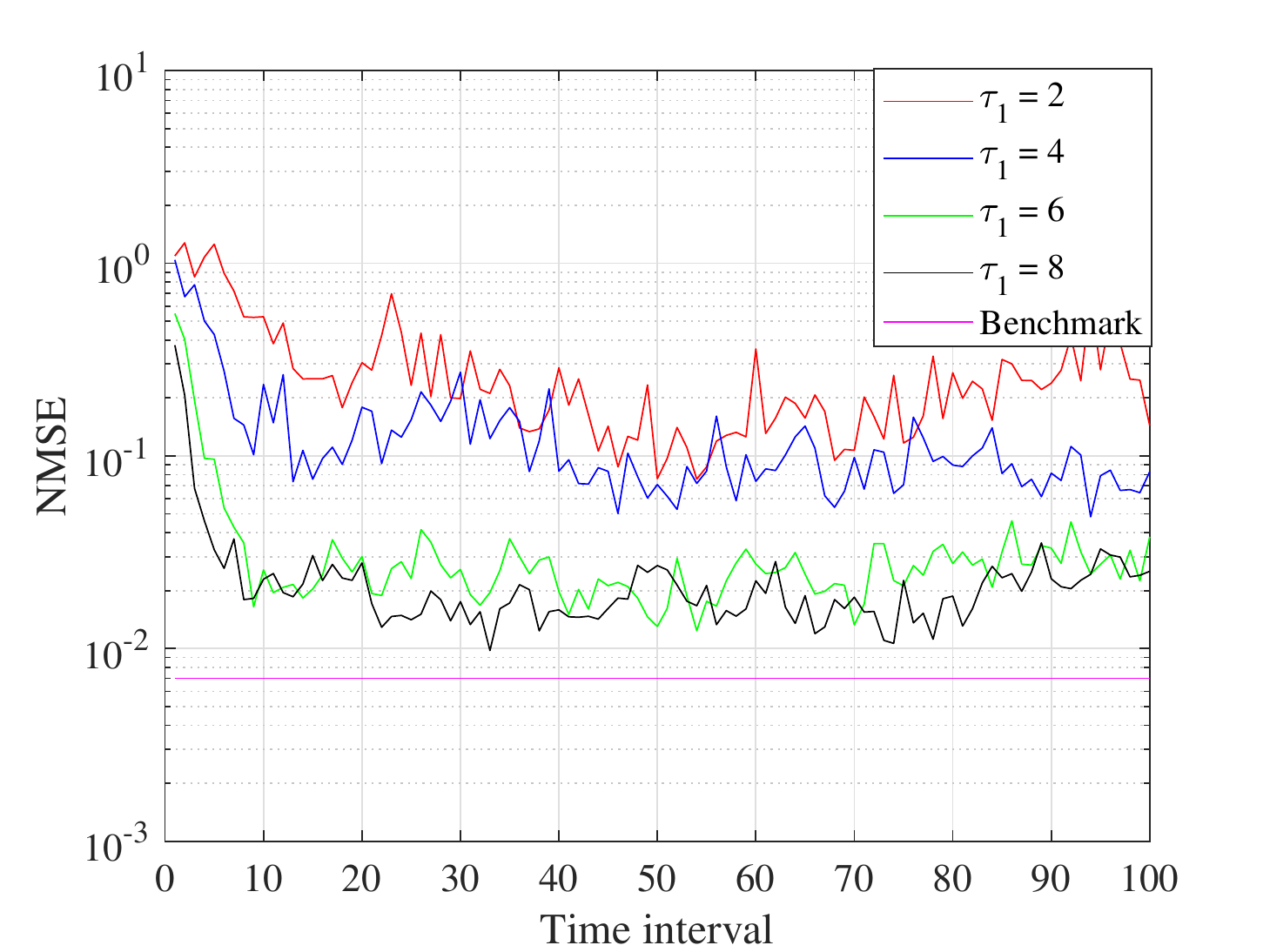}
\caption{Channel tracking process in terms of  NMSE when different $\tau_1$ are assumed in the first stage (general case). }
\label{NMSE-tau-g}
\normalsize
\end{minipage}
\;
\begin{minipage}[t]{0.5\textwidth}
\centering
\includegraphics[scale=.45]{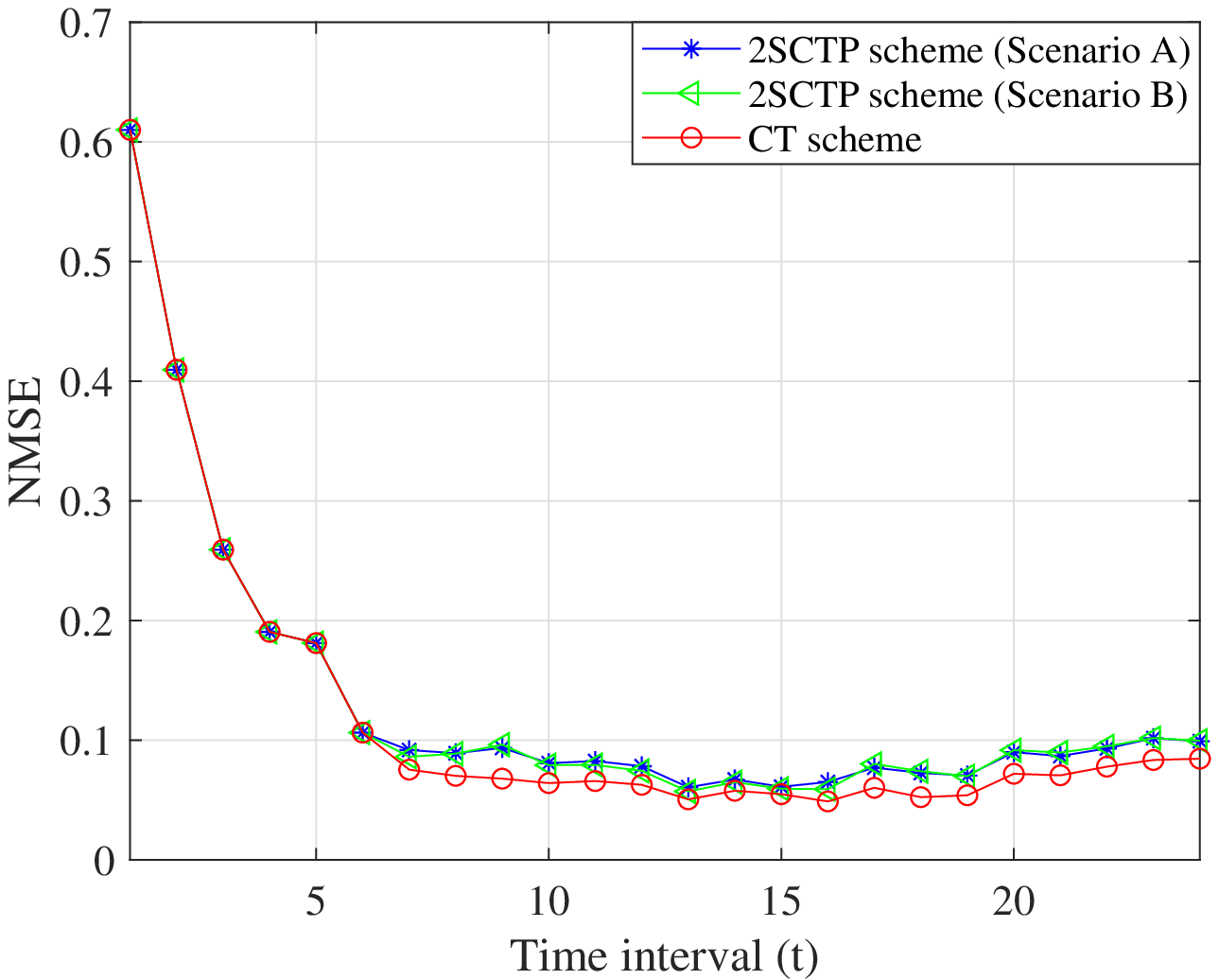}
\caption{NMSE performance of the proposed 2SCTP scheme in different scenarios (the general case). }
\label{SCTPperG}
\normalsize
\end{minipage}
\end{figure}


Finally,  Fig. \ref{SCTPperG} presents the NMSE performance of the proposed 2SCTP scheme in the general case by combining  Algorithm \ref{NonKalman} and the OB-LSTM network. Different from that in Fig. \ref{SCTPper}, we consider two scenarios with the same amount of channel training overhead, i.e., in Scenario A, we set $T_1= 6$, $T_2 = 3$, $L_I = 6$, $L_P = 3$ and employ \emph{Strategy A} to generate the predicted imaginary observations;
  while  in Scenario B, we  set $T_1= 6$, $T_2 = 3$, $L_I = 6$, $L_P = 1$ and use  \emph{Strategy B}. It can be seen that in the general case, the proposed 2SCTP scheme can also achieve  very similar NMSE performance with the CT scheme, yet with much lower channel training overhead.}

\vspace{-0.5cm}\section{Conclusion}
In this paper, we investigated the CTP problem in an  IRS-aided wireless communication system with time-varying channel and designed an innovative two-stage transmission protocol.
 Based on the proposed transmission protocol, we proposed a novel 2SCTP scheme to track and predict the channels with low channel training overhead.
 By exploiting the temporal correlation of the channels, we proposed a  KF-based channel tracking algorithm for the special case when the IRS-AP channel is static, while for the general case, we developed a GKF-based channel tracking algorithm by devising a  simple Gaussian approximation method. Furthermore, we presented an LSTM-based neural network, namely the \emph{OB-LSTM} network, to predict the imaginary observations based on which the channels can be estimated  by applying   the KF/GKF-based channel tracking algorithms.
Numerical results
showed that the  proposed 2SCTP scheme is able to  outperform the existing CE and CT schemes significantly in terms of channel training overhead.

\vspace{-0.5cm}
\begin{appendix}
\begin{proof}
First, the quadratic sum of $a$ and $b$, i.e.,  $|a|^2 + |b|^2$, can be rewritten as
\vspace{-0.2cm}\begin{equation}
|a|^2 + |b|^2 =   \frac{1}{2}\big|(a_r^2 +  b_r^2) + (a_i^2 +b_i^2)\big| + \frac{1}{2}\big|(a_r^2 + b_i^2) + (a_i^2 + b_r^2)  \big|.
\end{equation}
According to the arithmetic and geometric (AM-GM) inequality \cite{Steele2004}, $x^2 + y^2 \ge 2xy$ and  $x^2+y^2\ge -2xy$ hold for all $x,y\in \mathbb{R}$. As such, we have
\vspace{-0.2cm}\begin{equation}
\begin{aligned}
&\left.\begin{aligned}
&(a_r^2 +  b_r^2) + (a_i^2 +b_i^2) \ge 2a_rb_r-2a_ib_i\\
& (a_r^2 +  b_r^2) + (a_i^2 +b_i^2) \ge -2a_rb_r+2a_ib_i
 \end{aligned}\right\} \Rightarrow   \big|(a_r^2 +  b_r^2) + (a_i^2 +b_i^2)\big| \ge 2\big| a_rb_r-a_ib_i \big|, \\
&\left. \begin{aligned}
& (a_r^2 +  b_i^2) + (a_i^2 +b_r^2) \ge 2a_rb_i+2a_ib_r \\
& (a_r^2 +  b_i^2) + (a_i^2 +b_r^2) \ge -2a_rb_i-2a_ib_r
 \end{aligned} \right\} \Rightarrow   \big|(a_r^2 +  b_i^2) + (a_i^2 +b_r^2)\big| \ge 2\big| a_rb_i+a_ib_r \big|.
\end{aligned}
\end{equation}
Then, it is easy to see that $|a|^2 + |b|^2$ can be  lower bounded by
\vspace{-0.2cm}\begin{equation}\label{AMGM}
|a|^2 + |b|^2 = \frac{1}{2}\big|(a_r^2 +  b_r^2) + (a_i^2 +b_i^2)\big| + \frac{1}{2}\big|(a_r^2 + b_i^2) + (a_i^2 + b_r^2)  \big| \ge |a_rb_r - a_ib_i| + |a_rb_i + a_ib_r|.
\end{equation}
Besides, since
$
ab =  (a_ra_i - b_rb_i) + j(a_rb_i  + a_ib_r)
$ holds,
\eqref{AMGM} can be written  as
$|a|^2 + |b|^2 \ge \big|\Re(ab)\big| +  \big|\Im(ab)\big|$. This thus completes the proof.
\end{proof}
\end{appendix}

\vspace{-0.0cm}

\bibliography{IRS_time}
\bibliographystyle{IEEETran}

\end{document}